\newtheorem{prop}{Proposition}
\theoremstyle{definition}
\newtheorem{defn}{Definition}
\begin{document}
\title{Synchronizing rhythms of logic}
\author{John M. Myers$^{Ia}$ \vspace*{10pt}and  Hadi 
  Madjid$^{IIb}$ \\{\it 960 Waltham St. Apt. 168, Lexington, MA 02421$^a$,}\\
  {\it 309 Winthrop Terrace, Bedford \vspace*{10pt}MA 01730$^b$.}\\
  jmartmyers@gmail.com$^I$, \vspace*{-.1 in}gailmadjid@comcast.net$^{II}$}

\date{\today}
\maketitle

\begin{abstract} 
  Although quantum states nicely explain experiments, the outcomes of experiments are not states. Instead, outcomes correspond to probability distributions. Twenty years ago we proved categorically that probability distributions leave open a choice of quantum states to explain experiments  that is resolvable only by a move beyond logic, which, inspired or not, can be characterized as a guess.  Guesses link the inner lives of investigators to their explanations of experimental results.  Recognizing the inescapability of guesswork in physics leads to avenues of investigation, one  of which is presented here.

We invert the quest for the logical foundations of physics to reveal a physical
basis for logic and calculation, and we represent this basis
mathematically, in such a way as to show the shaping and re-shaping of
calculations by guesswork.  We draw on the interplay between guessing and
computation in digital contexts that, perhaps surprisingly, include living organisms.
Digital computation and communication depend on a type of synchronization that
coordinates transitions among physically distinct conditions represented by
``digits.'' This {\it logical synchronization}, known to engineers but neglected
in physics, requires guesswork for its maintenance.  By abstracting digital
hardware, we model the structure of human thinking as logically synchronized
computation, punctuated by guesses.

We adapt marked graphs to mathematically represent computation and represent guesses by unpredictable changes in these marked graphs. The marked graphs
reveal a logical substructure to spatial and temporal navigation, with
implications across physics and its biological applications.  By limiting our
model to the logical aspect of communications and computations---leaving out
energy, weight, shape, etc.---we unveil logical structure in relation to
guesswork, applicable not just to electronics but also to the functioning of
living organisms.
\end{abstract}

\noindent{\bf Keywords:} guesswork,   physical basis for logic, synchronization, marked graphs, logical distance, basal cognition.
\newpage
\setcounter{tocdepth}{2}
\tableofcontents
\newpage

\section{Introduction}\label{sec:one}
One could always suppose  that guesswork is essential to physics; however a proof
provides a firm foundation on which to build.  Twenty years ago we proved that
guesswork is necessary to bridge logical gaps between evidence and any quantum
explanation of that evidence.  Acknowledging guesswork leads physics away from a
quest for final answers to a form of dialogue that expects and accepts
surprises.  The broad consequences of this include the following.
\begin{enumerate}
\item Choosing an explanation is an imaginative act, subject to the need for revision, so ``no final answers'' are to be sought.
\item Because of their dependence on guesswork, scientific explanations, although testable, have something of the imaginative character of metaphors.
\item The inner lives of physicists---unavailable to exterior, ``objective''
  view---participate in and influence the world that physicists measure with their
  clocks.
\item The source of any guess on which logical deductions depend is logically inexplicable. Andr\'e Malraux, Minister of
Culture in France from 1958 to 1969, interviewed leading artists about their
sources of inspiration.  Their answers led Malraux to speak of their inspiration
as coming from contact with what he called ``the Unknowable''
\cite[p.\ 98]{mask}. 
\end{enumerate}
So what might be the next step in appreciating ``no final answers''?
In this paper we develop the following three thoughts and some of their \vspace*{8pt}implications:\\
\setlength{\fboxrule}{1.2pt}
\fbox{\begin{minipage}{36em} {\bf (1) The Unknowable contributes to the unpredictable evolution of mathematics.\\ (2) Not just people but all living organisms depend on both computation and guesses.\\ (3) That dependence can be cartooned  mathematically.}
\end{minipage}} \vspace*{10pt}

We propose that logic and computation depend on how matter can behave in ``lumps'', i.e., exhibiting the distinct physical conditions, whether in patterns of  pebbles or the nucleotides strung like beads of a necklace in molecules of DNA.  
The distinct conditions must undergo transitions to other distinct conditions, as in the motion of beads on an abacus or the electronicic motion of digital-computing hardware.  The transitions among distinct conditions depend on a special
type of synchronization that differs
from the synchronization made famous in physics by special relativity. As shown below, logical synchronization requires the steering of clock rates,  and that steering requires guesswork.  Computation requires communications that must be logically synchronized and modern communications require computation.  As realized in  hardware, digital computation and communication are inseparably woven into logically synchronized {\it computational networks}.

Appreciation of the physical basis for computational networks, centered on logical synchronization with its need for guesswork, has implications that straddle  physics, biology, and mathematics.
\begin{enumerate}
\item In physics, we show how communication works as a topological
  foundation of space and time, involving guesswork in the maintenance of
  necessary logical synchronization.
\item In biology, avoiding any discussion of consciousness,
we model basal cognition along with human thinking as computation punctuated by unpredictable reprogramming.  
\item In discrete mathematics, we define marked graphs that are adapted to express the communications and computations in both digital systems and living organisms. Marked graphs are the only mathematical objects we have found that deal with motion without assuming a clock (i.e.,  a physical device), something necessarily outside of mathematics.  As shown below, they allow the definition of {\it logical distance} as a precursor to physical time and distance.
\end{enumerate}

\subsection{Background} In 2005, pursuing a vision experienced by HM in 1985, we proved that evidence, while constraining explanations, necessarily leaves open
an infinite range of conflicting explanations \cite{05aop}.  Thus, choosing an
explanation requires an extra-logical act. I (JMM) call such an act a {\it
  guess}.  What do I mean by ``guess''? As an individual scientist,  diverse thoughts flit into my mind when I am
puzzled. Some are fleeting, but there are others I pick up
and use as starting points for doing something, and these I call guesses.  For
me, guessing is contrasted with calculating.  The logical gap between evidence and
its explanations, calling for guesswork to link them, reveals an
unpredictability far more drastic than quantum uncertainty, and beyond any
fuzziness due to limited sample sizes.

Although it has important consequences, the proof of this has been largely
overlooked.  For one thing, verifying the proof requires knowledge omitted in
introductory courses on quantum mechanics.  In these courses, one learns how a
quantum state and a measurement operator on a vector space imply certain
probabilities of outcomes, but not about the so-called an {\it inverse problem},
the problem of using quantum theory to {\it explain} measured data.  The inverse
problem is important when something surprising is found, as in the
Stern--Gerlach experiment that led to half-integral spin \cite{SternGerlach}.

In a limited way, the inverse problem is addressed in quantum decision theory
\cite{helstrom,holevo} and in quantum information theory
\cite{nielsen}. However, one usually assumes a finite dimensional
vector space, and that the measurement operators are known (leading to
what is called {\it quantum tomography}).  However, both the operators and the
dimension are invisible to experiment,  so the assumption is logically
indefensible.  Thus, solving an inverse problem requires determining a
quantum state {\it and} measurement operators from the probabilities of
outcomes abstracted from experiments {\it without} assuming the dimension
of a vector space.  We proved that this problem cannot have a unique
solution, and indeed that infinitely many conflicting explanations fit whatever
evidence is at hand, so choosing among them is impossible without a guess. In
an application, we produced a second explanation of probabilities involved in a
quantum cryptography protocol, revealing an unsuspected security vulnerability
\cite{07tyler,Conditional}.

\subsection{The obstacle of global time}
The appreciation of logical synchronization is impeded by the widespread tacit assumption of ``universal time,'' as if the production of
time and space coordinates for scientific and other purposes were irrelevant to
theoretical physics.\footnote{``An important task of the theoretical physicist
lies in distinguishing between trivial and nontrivial discrepancies between
theory and experiment'' \cite[p.\ 3]{feshbach}.} 
Setting aside this assumption allows for the consideration of synchronization in digital
communications, which is essential to time distribution, computation, and indeed,
all logical operations.

To begin with, we notice how national and international time broadcasts depend
on digital communication networks that {\it construct time} from local clocks
whose rates must be adjusted.  These adjustments of the clocks {\it make}
time depend on guesswork.  For example, I set my clock using my phone, but how does my phone
tell the time?  It gets the time from the Internet, but where does the
Internet get it?  In the US, it gets it from Boulder, Colorado, where the
National Institute of Standards and Technology (NIST) has clocks. While my clock
tells the time, the clocks at NIST {\it make} time for the United States.  NIST
transmits readings of its clocks over the Internet, so that my phone can tell
time.  Why do I say ``clocks'' at NIST and not just ``clock''?  Like any
machines, the clocks at NIST need maintenance, and sometimes break and are
replaced, so NIST must use several clocks.  The clocks at NIST do not agree  exactly
with each other. NIST adjusts the clock rates to limit their deviations from each
other, and these adjustments require guesswork.  NIST and other national
metrology institutes maintain global networks of clocks, some on the ground
others on orbiting satellites, all undergoing rate adjustments, tuned by
guesswork \cite{tfr}.
\begin{quote}
The fact is that time as we now generate it is dependent
upon defined origins, a defined resonance in the cesium atom, interrogating
electronics, induced biases, time scale algorithms, and random perturbations
from the ideal. Hence, at a significant level, time---as man generates it by the
best means available to him---is an artifact. Corollaries to this are that every
clock disagrees with every other clock essentially always, and no clock keeps
ideal or ``true'' time in an abstract sense except as we may choose to define
it \cite{allan87}.  
\end{quote}

\subsection{The inseparability of computation and communication}
The digital communication and computation that pervade today's science depend on:
 (1) physically distinct conditions to distinguish between 0 and 1, (2)
 transitions between these conditions, (3) local timing ordered by local clocks that
 are synchronized differently from the usual manner in physics, and (4) guesswork.
 How these work together in computer networks is one of the main topics of this report.

Computations involve the communication of inputs and outputs over networks,
including the Internet.  In the other direction, digital communication requires
message processing, e.g. copying, addressing, searching, coding, and decoding, all
of which are instances of computation.  We elevate this back-and-forth
dependence to the following principle of wide application.
\begin{prop}
  Computation and communication are inseparable.
\end{prop}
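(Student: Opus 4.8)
The plan is to argue the Proposition not by deduction inside a pre-built calculus but by exhibiting a mutual dependence that forbids either notion from being realized in isolation. First I would fix provisional working characterizations: a \emph{computation} is any physically realized process that transforms one configuration of distinct conditions (the ``digits'') into another according to a rule, while a \emph{communication} is any physically realized process that reproduces a configuration of distinct conditions at a condition-bearing site other than the one at which it originated. On these characterizations the assertion of inseparability acquires a definite meaning: every instance of one already contains an instance of the other, so no operational boundary can be drawn that places all computation on one side and all communication on the other.

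Next I would establish the two inclusions in turn. For \emph{communication presupposes computation}: moving a digit pattern from one site to another is never a bare translation through space; the pattern must be re-formed at the destination, which in digital hardware is exactly the list of operations the text has already named---copying, addressing, routing, coding, decoding, error detection---each of which is by definition a rule-governed transformation of configurations of distinct conditions, hence a computation. I would stress that even the limiting case of ``doing nothing'' across a link is not an exception, since the regeneration of a digit against noise is itself a nontrivial rule: the decision as to which discrete condition the received signal shall count as. For \emph{computation presupposes communication}: a computation whose result is to be available to anything beyond the gate that produced it must carry that result across the boundary of the gate, and its inputs must have been carried to it; and within any computation of more than a single primitive step the output conditions of earlier steps must be conveyed to the sites of later steps. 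Each such carrying is the reproduction of a configuration of distinct conditions at a site other than its origin, hence a communication.

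The conclusion then follows at once: since any communication already exhibits a computation, and any computation that is more than a single isolated primitive already exhibits a communication, there is no partition of digital activity into a purely communicative part and a purely computational part; the two are co-constitutive. I would close by flagging the single degenerate case the argument does not reach---one primitive operation whose result is never read by anything---and by observing that such a case is excluded in any network that does work, so it does not weaken the Proposition as intended; I would also note that the characterizations above are chosen so that the argument survives the later re-expression of both notions in the language of marked graphs.

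I expect the main obstacle to be the threat of circularity: because each term is being characterized partly through the other, a careless formulation would make the Proposition either vacuous or question-begging. The hard part will be choosing the working characterizations so that ``computation'' (rule-governed transformation of digit configurations) and ``communication'' (site-to-site reproduction of digit configurations) are each specified independently enough that their necessary co-occurrence is a substantive fact about digital hardware rather than a disguised repetition of definitions---and so that this substance is preserved, not lost, when the marked-graph model is introduced later in the paper.
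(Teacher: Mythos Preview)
Your proposal is sound in content and captures the same two-way dependence the paper invokes, but you should know that the paper does not actually prove this proposition. It is presented as a \emph{principle} rather than a theorem: the sentence immediately preceding the proposition states that computations involve communication of inputs and outputs over networks, and that digital communication requires message processing (copying, addressing, searching, coding, decoding), all of which are computation; the authors then write ``We elevate this back-and-forth dependence to the following principle of wide application'' and state the proposition without a proof environment. The next sentence simply adopts the bundled term \emph{computational networks}.

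So your two inclusions---communication presupposes computation, computation presupposes communication---match the paper's informal justification almost exactly, even citing the same list of operations. Where you differ is in scope and rigor: you supply working characterizations of each term, worry about circularity, handle the degenerate single-primitive case, and anticipate compatibility with the marked-graph formalism. None of that is in the paper. Your version would function as an actual argument; the paper's version is a declared stance. If your aim is to reproduce what the paper does, a one- or two-sentence statement of the mutual dependence suffices; if your aim is to supply the proof the paper omits, your proposal is a reasonable way to do it, and the circularity concern you flag is the right one to watch.
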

We will speak of neither separately, but instead bundle them together, speaking
of {\it computational networks}.  These are found not only in electronic
computers but across all life.  In \cref{sec:2} we define computational
networks embedded in cyclic environments that produce inputs and consume
outputs, as in process control.  Computational networks express two levels of
unpredictability: (1) the inputs supplied by an environment are viewed as
unpredictable, and (2) unforeseen happenings change one computational network
into another. 

Computational networks perform logical operations that work in coordinated
 rhythms.  Without communication-guided adjustments, the rhythms of any two
 logical operations drift, as do any two clocks, so that phase deviations
 between them increase indefinitely.  To limit phase deviations, computational
 networks depend on a type of synchronization that differs from that which
 Einstein made famous in special relativity.  This {\it logical synchronization}
maintains the necessary  phase relations in the rhythms of logical operations.  To
 picture logical synchronization, think of playing a game of catch, tossing a
 ball back and forth with another person: you cycle through phases of tossing
 and catching, and if you are looking the wrong way you don't catch the ball.  
 Coordinating the phase ``ready to catch'' with the arrival of the tossed
 ball exemplifies logical synchronization.  As tangible examples of
 computational networks, we introduce ``token games'' in which you can participate
 by moving tokens over a game board.

 Although some contributions to phase deviations can be predicted, the
 statistical properties are non-stationary and unpredictable for others.  To
 maintain logical synchronization, rate adjustments guided by guesswork are
 necessary.\footnote{For a pendulum clock, one adjusts the tick rate by changing
 the length of the pendulum by a small amount. Typically, a screw at the bottom
 of the pendulum is turned to slightly raise or lower its center of gravity.}
 Phase deviations in logical synchronization limit the rate of information
 processing. The tighter the logical synchronization, the smaller the phase
 deviations, and the more information can flow.  Controlling drift in clocks
 is analogous to steering a car in a gusty cross wind: better control requires
 better guesses.
 
Here, we discuss only the logical aspect of computational networks; i.e., we say
nothing about energy, weight, shape, visual or auditory output, etc.  We discuss
the physical mechanisms of electronic computation only to show their ``lumpy''
character, leaving open whether the lumps are small or large and whether they are
lumps in frequencies of oscillations or in spatial extent or some other
modality.  Because we focus exclusively on the logical aspect, our formulation
of computational networks and their logical synchronization has applications not
only to man-made systems but also to biology, including human thought.
Rejecting the assumption that ``physical'' means ``entirely explicable,'' we
represent human thought as computing interspersed with unforeseeable reprogrammings.

In \cref{sec:3}, we represent the logic of computational networks by means of
graphs with markings comprised of tokens placed on arrows.  Computational steps are represented by changes in these markings.  Markings  of interest are partitioned into equivalence classes.  Following \cite{71MarkedGraph}, we call equivalence classes of markings {\it families}, and these play an important role in the following sections. The concept of {\it logical distance} is introduced and shown to depend not only  on the graph but also  on its {\it family}, of which there may be more than one.  Unpredictable changes in computation are expressed by unpredictable changes from one graph to another or by a change of the family for the graph.
 
As discussed in \cref{sec:4}, logical synchronization and its dependence on
extra-logical adjustment point to an alternative concept of time and space.  In
\cref{sec:4.2}, we show how measurements of computational networks that attempt
to determine a temporal order of events sometimes lead to nonsense.  An
alternative ``time'' based on logical synchronization, rather than special relativity, advances the identification of ``time'' with
communication that Einstein began in 1905.

\Cref{sec:5} pulls together the main results relevant to biology, and
\cref{sec:6} reflects on the main points. For example, we see biological evolution as mirroring the human scientific approach: processes involving computation and testing, punctuated by something analogous to guesses.  This discussion also draws together some points that appeared as fragments in previous sections.

\section{Physics of computational networks}\label{sec:2}
We want a concept of computing that is applicable to a person with paper and
pencil, to a digital computer, or to a biochemical process in a living organism.
For this purpose, we extract some features from computer hardware---which,
unlike software, has a simple form---that we call a {\it computational
  network}.  In terms of hardware, a computational network consists of
interconnected logic gates, linked to an unpredictable environment.\footnote{1)
We are interested in finite networks of gates rather than in Turing
computability.  2) Some restrictions on networks to ensure proper functioning,
including repeatability, will be stated below.}  Each gate repeatedly performs
one or another logical operation. Partitioning logical operations into sequences
executed by logic gates imparts stability needed for memory. For example, memory
is often composed of cross-connected NAND gates that comprise flip-flops. As
mentioned in \cref{sec:one}, we allow for a wide range of gate mechanisms, including those that presumably occur in living organisms.

One can cartoon a computational network as a model railroad, with engines
carrying messages over tracks from station to station.  First, picture a
circular track with several stations located around it and a single engine.  When the engine arrives, a station modifies the message carried by the engine, puts the
modified message back on the engine, and sends the engine to the next station.  Now
picture a second circuit traversed by a second engine, with the two circuits sharing a
station.  The first engine that arrives at that station waits for the other engine
to arrive.  The messages from both engines are then processed together, and the
station generates a message for each engine to be conveyed to subsequent
stations on the two circuits.

As a cartoon for a computational network, the stations are logic gates, the
tracks connect the gates, and the engines carrying messages are tokens carrying
labels, with gates performing logical operations on the token labels.  At the finest level
of description each token label is 0 or 1.  Gates performing logical operations
on 0s and 1s can be connected in a network that adds and multiplies
numbers. Indeed, connected logical operations do all the computing that digital
computers and computer networks do.  We claim that this logical structure is
found not only in computer hardware, but also in all living organisms.  Below, we
discuss the ``ball-tossing'' aspect of logical synchronization necessary for the
logical operations of computational networks.

The physical mechanisms of the logic gates that perform logical operations  have a peculiar property.   Whether electronic or biochemical,   they depend on physically distinct conditions and on transitions among these distinct conditions.   What is peculiar is that {\it distinct} is undefinable in any way that is objective in the sense of being observer-independent. ``Distinct'' can mean distinct to one organism but not to others.  In the transitions among distinct conditions, logical synchronization is necessary to ensure that gates operate on distinct conditions rather than on an indistinct condition that must occur during  transitions. Crucial to the capacity of gates for distinguishing between conditions are  {\it tolerances of deviations}, so that tolerated irregularities do not cause errors.
For example,  in many digital computers, 0 is conveyed by a lower voltage while 1 is conveyed by a higher voltage, and to preclude errors due to unavoidable
voltage deviations, the high and low voltages have tolerated ranges separated by a ``dead band'' \cite{level}.

\subsection{Computing by logical operations in contact with unpredictable environments}\label{sec:2.1}
 The gates of a computational network {\bf physically perform} {\it logical
   operations} associated with {\bf mathematical} {\it logical connectives}
 such as {\it and}, {\it not}, {\it nor}, and {\it nand}, where {\it nand} is the
 negative of the {\it and} connective.  {\it Nand} is defined by the table of
 arguments and values in \cref{fig:1a}.  {\it We emphasize the distinction
   between {\it the physical performance of logical operations}, which involves
   motion, and {\it mathematical logical connectives} which can be written in
   tables that sit still on a page.}  To distinguish the operations from the
 connectives, we capitalize the operations as AND, NOT, NOR, NAND, etc.

 Because all the logical operations can be defined as compounds of NAND,
 that specific operation plays a major role in what follows.\footnote{How the AND, NOT, and NOR gates of computational networks are made from NAND
gates (together with FORKs) is shown in \cref{fig:1to3} in Appendix
\ref{sec:progControl}.}
 Each input of any logical
 operation can be either of two distinct physical conditions, as can the
 output.  These conditions are represented by 0 and 1, as indicated in
\vspace*{-8pt}
\cref{fig:1a}.
\begin{figure}[H]
  \hspace*{.3 in}
 \includegraphics[height=2.2 in]{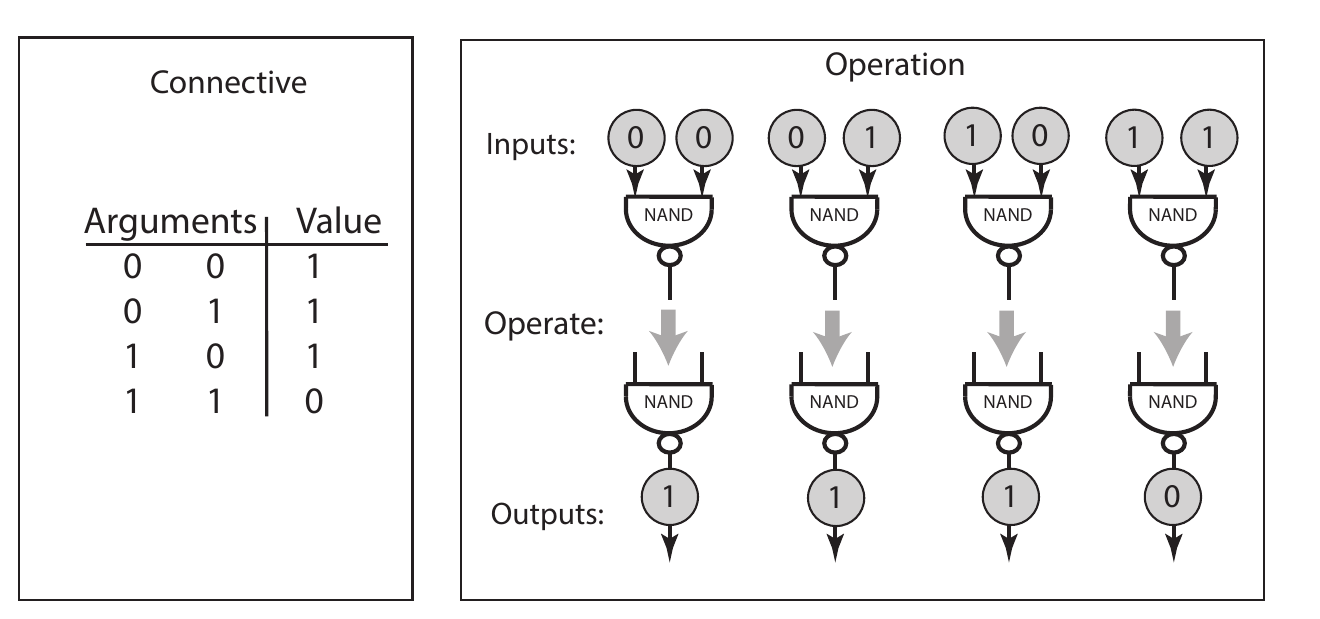}
 \caption{NAND connective and  NAND logical operation.\label{fig:1a} }
\end{figure}

\subsection{Gates and networks}
While the gates mentioned so far deal only with 0s and 1s, we also use the
term {\it gate} to indicate a mechanism that performs complex logical functions,
as occurs in coarsened descriptions of logical operations (see Appendix
\ref{sec:frag}).  We deliberately leave unspecified the various physical
mechanisms that can serve as gates, and we name gates after the operations that
they perform.

Anything a computer chip can do can be done by a network of NAND gates wired
together. To express the logic of wiring, we introduce a connective that we call {\it
  fork}, which as far as we can tell has been overlooked in formal
logic. \Cref{fig:1b} illustrates this connective along with the associated gate
operation FORK.
\captionsetup[figure]{font=small,skip=10pt}
\begin{figure}[H]\hspace*{.8 in}
 \includegraphics[height=2.5 in]{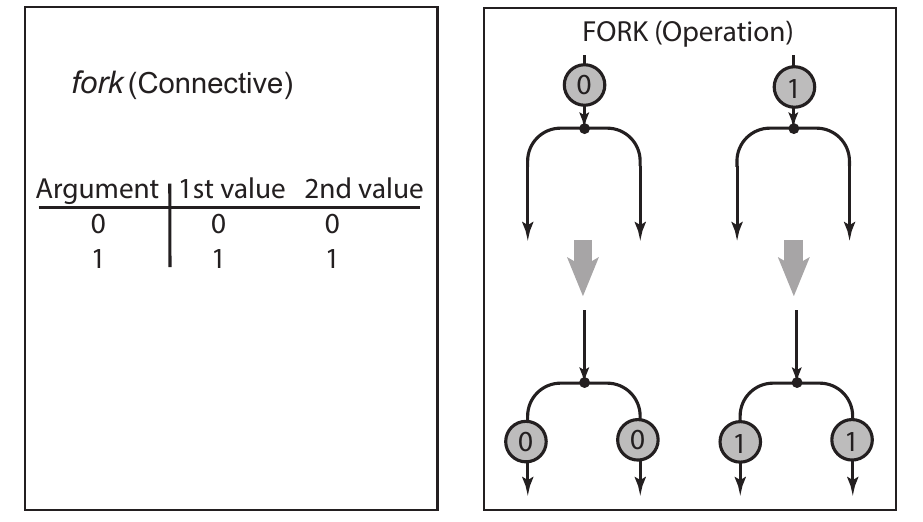}
 \caption{{\it fork} connective and FORK logical operation.\label{fig:1b} }
\end{figure}

 \subsection{Graphs and token games}\label{sec:tg} 
Graphs comprised of nodes connected by arrows are  used widely in science.  For example,
\cref{fig:citric}  illustrates the citric acid
cycle that powers most life \cite{cell}.
  \captionsetup[figure]{font=small,skip=10pt}
  \begin{figure}[H]\hspace*{1.2 in}
 \includegraphics[height=2.8 in]{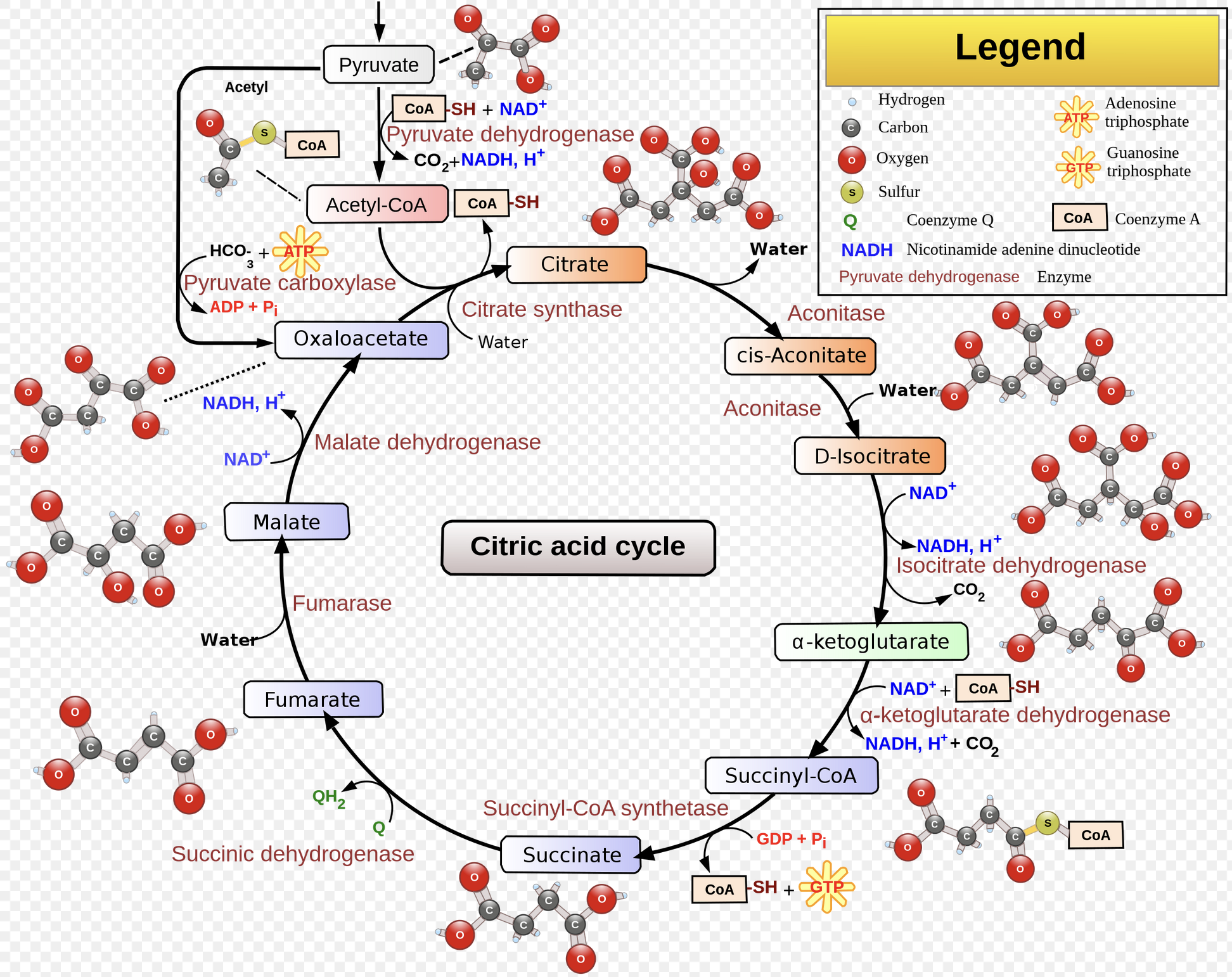}
 \caption{A graph of the citric acid cycle \cite{citric}.\label{fig:citric}
    }
  \end{figure}
  The concept of a computational network recognizes an unpredictable environment
 that generates inputs and receives outputs  (for process-control computations,
 this back-and-forth communication with an environment is evident).  We
 represent computational networks by graphs, in which each node 
 either represents a gate or is labeled ENV to represent an unpredictable environment that
 issues inputs and consumes outputs.  Each arrow points from one node to either the
 same or another node, representing a connection in which the output of one
 node serves as an input to that or another node.  This use of graphs leads to
 two developments, one physical and the other mathematical.  On the physical
 side, one can make a small computational network by using a drawing of the
 graph as a ``game board'' on which to either perform NAND and FORK operations by
 hand or make free choices to simulate the unpredictable ENV;
 tokens labeled by 0 or 1 are moved in what is called a {\it token game} \cite{96Petri}.
 On the mathematical side, representing computational networks by
 graphs leads to an elaboration of graph theory, i.e. {\it marked graphs}, which are discussed in \cref{sec:3}.

To play a token game  ``solitaire,''  a single person performs the logical operations for all the nodes. However, we mainly consider multi-player token games with  a player (human or otherwise) performing the role of each node. The rules of the game
are as follows.
\begin{enumerate}
\item A node is called {\it fireable} if there is a token on each arrow
  pointing into it.  When a node is fireable, a {\it firing rule} calls for the
  player of the node to remove the tokens on all the arrows pointing into the
  node, and {\it after} that to place a token on each of the arrows pointing out of
  the node.
\item {\it Labeling rules} (sometimes called coloring rules) specify how labels on output tokens depend on those on input tokens.  The labeling for NAND and FORK is specified in  figs.\ \ref{fig:1a} and \ref{fig:1b}, respectively.  An ENV node has no labeling rule; the player has a free choice to simulate the unpredictable  labeling by an environment.
\end{enumerate}
 
  To demonstrate  the need for the {\it after} in the firing rule, \cref{fig:vw}
shows snapshots of a token game, before, after, and during a firing.  \Cref{fig:vw}(a) shows a fireable node $v$, while
\cref{fig:vw}(c) shows the graph after a firing of $v$.
\Cref{fig:vw}(d) shows what the snapshot in the midst of the move could show, without the ``after'' in the firing rule: $v$ is no longer fireable but has
not been completely fired, and $y$ has been made fireable. Node $y$ can fire before the completion of the firing of $v$,
resulting in a return to the situation of \cref{fig:vw} (a), without node $w$
ever firing.  Preventing such behavior necessitates the
``after'' condition in the firing rule.
\begin{figure}[H]\hspace*{.5 in}
 \includegraphics[height=2.1 in]{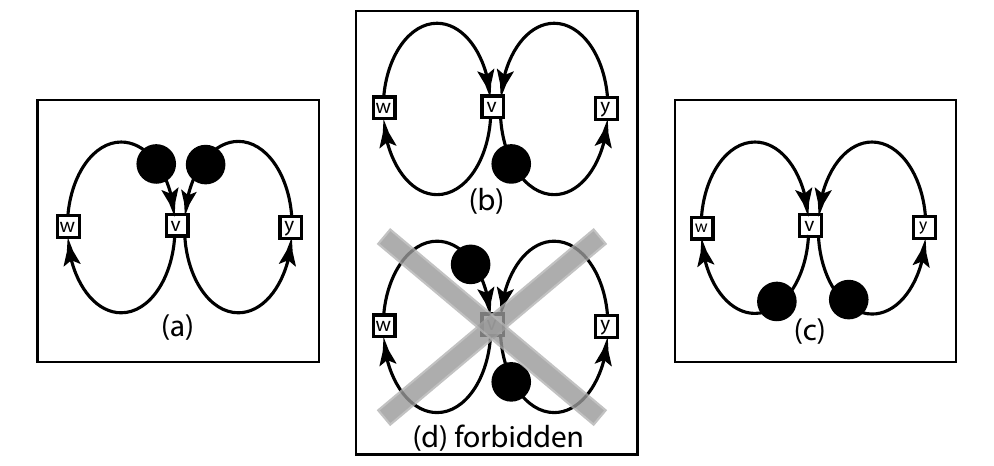}
 \caption{{\footnotesize (a) Node $v$ fireable; (b) node $v$ has been partially but not completely fired, making $y$ fireable but not $w$; (c) $v$ has been fired and $w$ and $y$ are concurrently fireable . Labels not shown.}
   \label{fig:vw}
 }
\end{figure}
  
\Cref{fig:first} shows a simple example of a token game:\footnote{The drawing in
\cref{fig:first} resembles a digital circuit diagram, from which we borrow the
symbol for NAND.}
  \begin{figure}[H]\hspace*{1 in}\vspace*{-.1 in}
 \includegraphics[height=2.2 in]{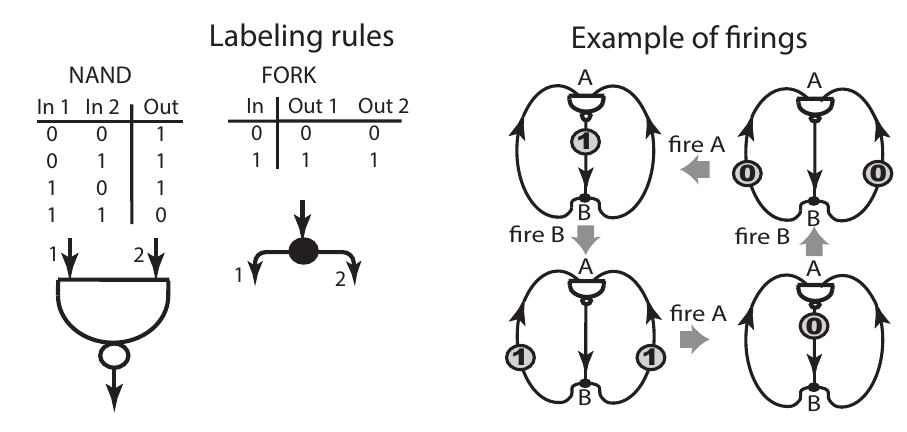}
 \caption{Example of token game.
   \label{fig:first}
 }
\end{figure}
 In a second example, a (simplified) thermostat performs a repetitive computation
to control the temperature of a room.  As illustrated in \cref{fig:therm1}, the
``Compute'' node compares a desired setting written on a token supplied by the
environment (ENV) with the temperature on a token from a digital thermometer in
the environment. From the result of this comparison, the Compute node sends a
command on a token to ENV to turn the furnace on or off.  The comparison makes use of an
arithmetic adder, illustrated in \cref{fig:full} in \Cref{sec:frag}.
  \begin{figure}[H]\hspace*{1.3 in}
 \includegraphics[height=1.8 in]{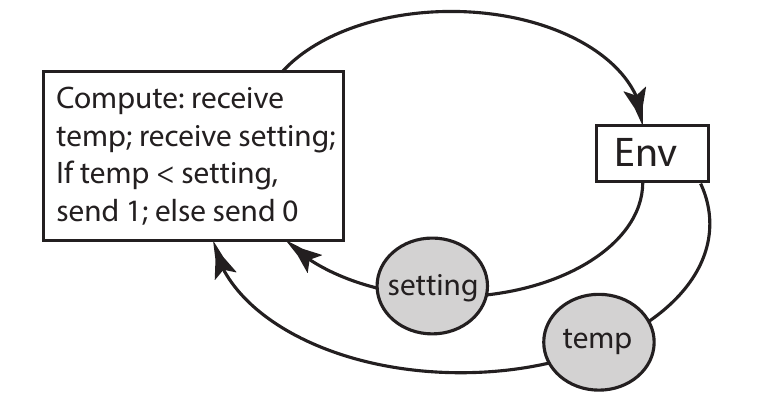}
 \caption{A token game for simple thermostat.\label{fig:therm1}
    }
\end{figure}

\subsection{The coordination of tokens depends on logical synchronization}\label{sec:logSync1}
Clocks step the gates of electronic computers through phases, and a gate
can receive a token in one phase but not in others \cite{14aop,19MST}.
Maintaining adequate phasing of token arrivals at logic gates constitutes
logical synchronization, illustrated in \cref{fig:2clock}.  Analogous phase
management is to be looked for in biochemical cycles  and the spiking of neurons.
\begin{figure}[H]
  \hspace*{.8 in}
 \includegraphics[height=1.5 in]{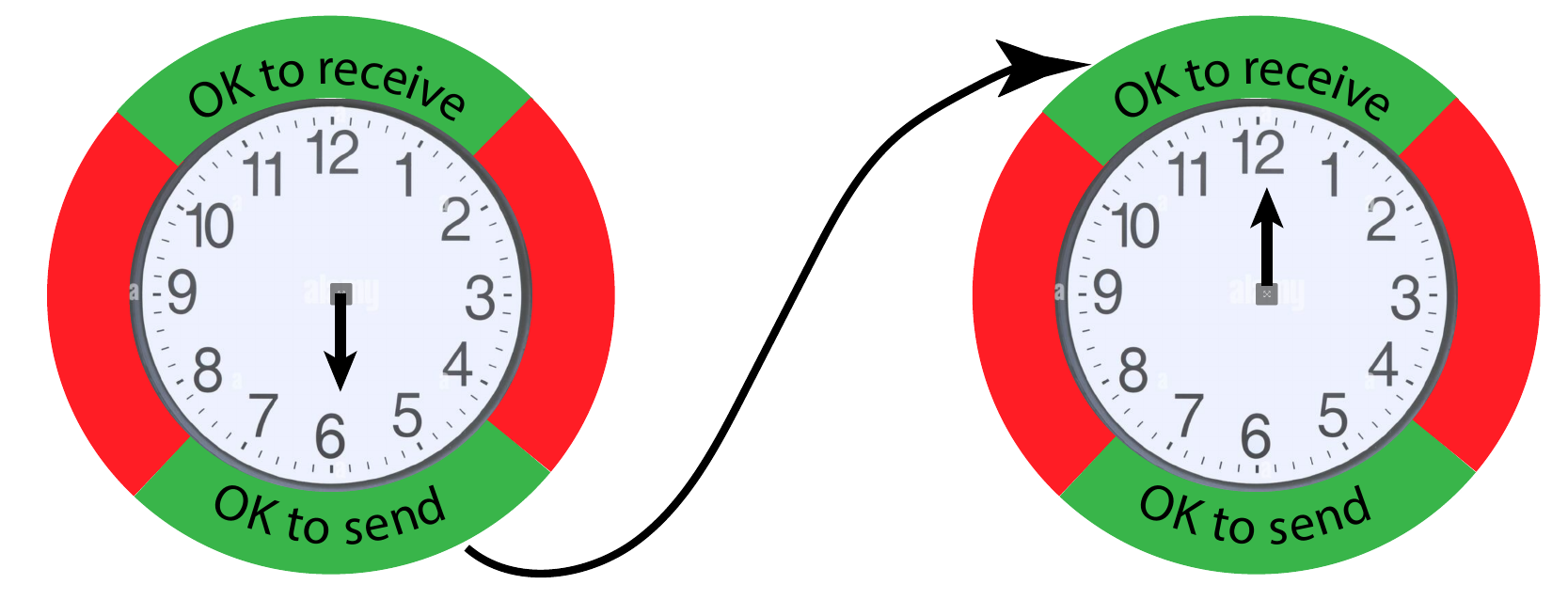}
 \caption{Phases for sending and receiving a token.\label{fig:2clock} }
\end{figure}
  The difference in hand positions of the two clocks indicates a propagation
 delay; e.g. the receiving clock at the arrow head shows a later time than the
 sending clock at the arrow tail.  As discussed in \cite{14aop,19MST},
 logical synchronization can be maintained between computers without them being
 set to standard time broadcasts, so long as their phases are aligned with
 propagation delays.
 
\subsection{Unpredictable steering to maintain logical synchronization}\label{sec:steer} 
Maintaining logical synchronization requires (1) detecting unpredictable deviations in the phasing of token arrivals, and (2) responding to those deviations by adjusting the rhythms of the computational networks to keep the deviations within allowed tolerances.
\begin{prop}
  The deviations of the phasing of tokens in computational networks are undetectable by those networks and must be detected by something else {\normalfont\cite{14aop,19MST}}.
\end{prop}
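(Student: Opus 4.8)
The plan is to show that the behavior of a computational network is a function of its \emph{logical trace} alone --- the combinatorial record of which nodes fire, in what causal order, carrying which labels --- and that this record contains no representation of the timing of token arrivals. First I would make ``phasing'' precise: a \emph{physical run} of a network $N$ assigns real arrival times to every token event, subject only to the causal constraint that a node fires after (by some propagation delay) the firings that produced its input tokens, and the \emph{phase deviations} of a run are the differences between these times and the nominal times prescribed by the clocks that step the gates. By contrast, the firing rule and the labeling rules of \cref{sec:tg} refer only to the presence or absence of tokens on arrows and to the $0/1$ labels they carry; neither rule mentions a clock reading, a timestamp, or a delay. Hence the logical trace --- the firing sequence together with its labels, with all times erased --- is well defined, and the outputs of $N$ (the labels on tokens delivered to any ENV node) are determined by the logical trace alone.

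Second, I would use this trace-invariance to establish indistinguishability. Given two physical runs of $N$ that share the same logical trace but have different phase deviations, both within the tolerances that, by construction of the gates in \cref{sec:2}, keep $N$ functioning as designed, every token that $N$ ever emits carries the same label in both runs. Therefore no label pattern produced by $N$ can differ between the two runs, and in particular $N$ cannot place, on any output arrow, a token whose label encodes ``a phase deviation of such-and-such magnitude occurred,'' since such a token would have to take different values on runs that $N$ cannot tell apart. This is the sense in which the deviations are undetectable \emph{by the network}: to detect would mean producing a designated output that reliably tracks the deviation, and no function of the logical trace does so. Deviations large enough to violate a tolerance do change behavior, but by corrupting the logical trace in an uncontrolled way --- a malfunction, not a measurement --- so they too are not thereby detected.

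Third, I would close off the obvious escape, namely building a phase-comparator node into $N$ itself. Any such comparator must receive at least two timing references and emit a token reporting their offset; but emitting that token is itself a logical operation of a gate, so the comparator sub-network must itself be logically synchronized with whatever feeds it --- which by the same argument it cannot police from within. Iterating, each proposed internal detector spawns a further synchronization whose own deviations it cannot detect, an infinite regress. The regress terminates only by appeal to ``something else'': an agency outside the combinatorial/logical structure that supplies an independent time standard against which arrivals can be compared --- a clock, ultimately read and adjusted by a guesser, as in the NIST example of \cref{sec:one}.

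The main obstacle I anticipate is pinning down ``detect'' so that the statement is neither vacuous nor false: it must exclude mere malfunction (behavior changing when a tolerance is crossed) and must be robust to clever self-monitoring circuitry. I expect to meet this with the trace-invariance lemma of the first step together with the regress of the third --- within the design regime the behavior is provably timing-invariant, so no honest internal detector exists there; outside it the behavior is corrupted rather than informative; and any attempt to straddle the two by internal phase comparison merely relocates the problem. A secondary technical point is verifying that the causal-consistency constraints on runs genuinely leave the phase deviations free to vary while fixing the logical trace, i.e. that the ``after'' discipline of the firing rule (\cref{fig:vw}) constrains \emph{order} but not \emph{metric} timing; this is where the marked-graph formalism of \cref{sec:3} does the real work.
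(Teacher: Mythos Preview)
Your proposal is correct and rests on the same core observation as the paper: the logical functioning of a computational network is, by design, indifferent to phase deviations within tolerance, so the network cannot distinguish runs that differ only in such deviations. The paper's entire proof, however, is a single sentence stating exactly this indifference principle and nothing more. Your trace-invariance formulation (steps~1--2) makes rigorous what the paper simply asserts as a design requirement, and your regress argument (step~3) about internal phase-comparators is entirely your own---the paper neither raises nor rebuts that objection. What your elaboration buys is a precise sense of ``detect'' and an explicit separation of malfunction from measurement; what the paper's terseness buys is that the proposition is treated as almost definitional, falling out immediately from what it means for a gate to have a tolerance in the first place.
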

\begin{proof}
The logical functioning of computational networks must be indifferent to deviations in the phasing of token arrivals within their allowed tolerances.
 \end{proof}
\noindent Before it can be maintained, logical synchronization must be  acquired so that communication can begin.  It is often acquired quickly, but  there is no deterministic limit on the time required \cite[Chaps.\ 4 and 5]{meyr}.

\subsection{Disorder of computing from failures of logical synchronization}\label{sec:ff}
If logical synchronization fails in the transmission of a token from one part of
a network to another, the response of a receiving gate to an incoming
token label can be ambiguous, leading to a computer crash.  To reduce this
risk, designers interpose a ``decision element,'' based on the flip-flop shown
in \cref{fig:ffmm}.  However, this decision element suffers from a half-life of
indecision as discussed in \cref{sec:4.2}.  Thus, even when filtered by the
decision element, the ambiguity can propagate through a FORK to be seen as 0 by
one branch and as 1 by the other branch, thereby producing a logical conflict
known as {\it glitch}, which sometimes leads to a crash \cite{glitch}.

\subsection{A preview of cognition in living organisms} \label{sec:2.7}
In living organisms DNA codes are reproduced with astoundingly low error rates
available only in digital computing.\footnote{For example, ``the human mutation
rate \ldots is approximately 1 mutation/10$^{10}$ nucleotides/cell division''
\cite{cell}.}  While living organisms also make graded movements that correspond
to analog computation, without digital computation, their DNA replication would
drift enough to make life impossible \cite{life}.  Digital computation in living
organisms confronts a physical problem---the need for logical
synchronization---that human hardware engineers have solved well enough
that software engineers do not need to know about it \cite{ken}.  Many models of
living organisms employ logic gates, but to the best of our knowledge, we are
the first to recognize the need for logical synchronization in biological
computing.

Without discussing consciousness, we show here that computation has a place in human
control of physiological processes and human thinking. In \cref{sec:5}, we
model biological computation by computational networks linked to unpredictable
environments, understanding that the networks themselves undergo unpredictable
changes, and that maintaining logical synchronization requires something
beyond digital computing.  As represented by the marked graphs of
\cref{sec:revise}, changes in computational networks are pictured by
corresponding unpredictable changes in the marked graphs.  
We suggest that {\it logical distance} as defined in \cref{sec:3.3} serves as a
physiological underpinning of navigation.
 
\section{Representing computational networks by marked graphs}\label{sec:3}
Marked graphs have earned a special place in physics.  After five decades of
searching, they are the only mathematical objects we have found that deal with
motion without the extra-mathematical assumption of available physical clock
readings.  A peculiar and interesting advantage of marked graphs is this: while
motion is fundamentally beyond the expressive capacity of mathematics, a special
kind of motion essential to calculation and, indeed, to all living organisms,
exhibits logical relations expressed by marked graphs.

The mathematics of marked graphs discussed in this section enables the definition the important concept of logical distance.  Logical distance provides a bridge between the evolution of nervous activity in organisms and their capacity for spatial and temporal distinctions.  As described in  \cref{sec:4}, marked graphs also sharpen our understanding of computational networks.
See   \Cref{sec:G} for a tutorial on the relevant\vspace*{10pt}
mathematics.

\setlength{\fboxrule}{1.2pt}
\fbox{\begin{minipage}{32em}The marked graphs in this report represent computational networks designed for repeated runs of a single computation.  For this reason, our marked graphs differ from those that portray a general-purpose computer.
\end{minipage}}\vspace*{8pt}

Marked graphs mathematically represent the logical aspect of computational
networks.  They do not express continuously variable quantities such as voltages
and durations, but instead express {\it distinctions between physical conditions},
such as those that convey 0 or 1.  To arrive at the marked graphs used here, we
specialize Petri nets \cite{96Petri} to what in their jargon are called T-nets,
and we generalize them by allowing finite, directed multigraphs with loops (as
defined in \Cref{sec:G}).

Representing a computational network mathematically involves negotiating a gap
between physical motion and mathematical formulas.  As described below in
\cref{sec:4}, there need be no moment during which all nodes of a computational
network are between firings.  However, if the token game is played
``solitaire,'' then each possible sequence of moves has still moments during
which no nodes are firing, making unambiguous snapshots of token positions and
labels possible.  The locations of tokens at a moment between firings are
expressed mathematically by a {\it marking}, and a {\it labeled marking}
augments a marking by specifying the label carried by each token.  While labeled
markings are essential for representing computation, the {\it rhythms} of
logical operations can be expressed by unlabeled markings.

We restrict our marked graphs to markings that always make some node fireable and that never put more than one token on an arrow;  these are called {\it live and safe markings}.  More precisely, a marking is {\it live} if every node is fireable or can be made fireable through some sequence of firings.  A live marking is {\it safe} if it puts no more than one token on any arrow and if no sequence of firings starting from that marking can put more than one token on any arrow \cite{71MarkedGraph}.
Proofs of propositions in this section draw on 
\Cref{sec:G}, which is based on \vspace*{6pt} \cite{71MarkedGraph}.

The mathematical expression of the {\it firing} of a node is a relation between
two live and safe labeled markings: $M$ and $M'$ are related by the firing of
node $v$: (1) if $M$ specifies a token for each in-arrow of node $v$ and no
token for each out-arrow, if $M'$ specifies no token for each in-arrow of
$v$ and a token for each out-arrow, if $M$ and $M'$ agree about the presence of tokens on arrows not in contact with $v$, and (2) if the labels on the tokens on
out-arrows of $v$ follow a labeling rule for node $v$.
Mathematically, {\bf a firing is a relation};
nothing moves.  The ``snapshots'' $M$ and $M'$ exemplify
the mathematical representation of the logical relations inherent in logically synchronized motion.

Live and safe markings are possible only for a graph that is covered by
circuits.  \begin{prop}\label{prop:circ} The number of tokens on any circuit is
  the same for any pair of markings related by a sequence
  of firings.
\end{prop}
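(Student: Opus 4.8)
The plan is to prove the stronger local statement — that the token count on a circuit is unchanged by a \emph{single} firing — and then to obtain the proposition by induction on the length of the firing sequence. Recall from \Cref{sec:G} (following \cite{71MarkedGraph}) that a circuit $C$ is a directed cycle: a cyclic sequence of arrows in which the head of each arrow is the tail of the next. For a live and safe marking $M$ write $M(a)\in\{0,1\}$ for the number of tokens on an arrow $a$, and set $\tau(C,M)=\sum_{a\in C}M(a)$, the number of tokens on $C$.

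First I would isolate the one combinatorial fact that does the work: for every node $v$, the number $d_C(v)$ of arrows of $C$ pointing into $v$ equals the number of arrows of $C$ pointing out of $v$. This is immediate from the cyclic structure of $C$ — each time the arrow sequence of $C$ passes through $v$ it arrives along one arrow of $C$ and departs along the next — and it stays correct when an arrow of $C$ is a loop at $v$, such a loop being counted once among the in-arrows and once among the out-arrows. Next, suppose $M'$ is obtained from $M$ by firing $v$. By the firing rule $M'$ agrees with $M$ on every arrow not incident to $v$; each in-arrow of $v$ carried exactly one token (fireability together with safety) and loses it; and each out-arrow of $v$ has exactly one token added. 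Summing the changes over the arrows of $C$ only: arrows of $C$ not touching $v$ contribute $0$, the in-arrows of $v$ lying on $C$ contribute $-d_C(v)$, and the out-arrows of $v$ lying on $C$ contribute $+d_C(v)$, so $\tau(C,M')=\tau(C,M)$. A routine induction on the length of the firing sequence, with the empty sequence as base case, then gives the proposition; since the argument is symmetric in $M$ and $M'$, it applies whether the sequence is traversed forward or backward.

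I expect the only point needing care to be the clean statement and justification of the in-arrow/out-arrow count identity $d_C(v)$ in a \emph{directed multigraph with loops}: one must check that the bookkeeping stays correct when a circuit revisits a node or traverses a loop. Treating the arrows of $C$ as a multiset and counting loops with the appropriate multiplicity makes this routine rather than subtle. Readers familiar with Petri nets will recognize the proposition as the statement that the circuits of a T-net furnish token-conservation invariants.
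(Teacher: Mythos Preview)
Your proof is correct and follows the same approach as the paper's own argument (given in \Cref{sec:G} just before \Cref{prop:const}): a single firing balances the decrement on the circuit's in-arrow(s) of the fired node against the increment on its out-arrow(s), so the circuit token count is invariant, and induction on the firing sequence finishes it. Your version is more careful than the paper's one-sentence sketch in that you explicitly handle the multigraph-with-loops case and circuits that revisit a node, via the $d_C(v)$ bookkeeping.
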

\noindent \Cref{prop:circ} implies that the possible live 
markings of a graph partition into equivalence classes, and following \cite{71MarkedGraph} we call these classes
of markings {\it families}.  In discussing families, we pay no attention to labels on tokens.
We then \vspace*{5pt}have the following proposition.
\begin{prop}\label{prop:fams}
  The number of tokens on any circuit 
  is the same for all markings in any family \textup{\cite{71MarkedGraph}}. 
\end{prop}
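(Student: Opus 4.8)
The plan is to obtain \Cref{prop:fams} as an essentially immediate consequence of \Cref{prop:circ}, once the word ``family'' is unpacked. By the remark following \Cref{prop:circ}, a family is an equivalence class of live and safe markings under the equivalence relation generated by the relation ``$M'$ is obtainable from $M$ by a sequence of firings.'' Fix a circuit $C$ and let $f_C(M)$ denote the number of tokens that a live and safe marking $M$ places on the arrows of $C$; this is a well-defined nonnegative integer. The goal is to show that $f_C$ is constant on each family, and then to let $C$ range over all circuits.

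First I would record the key input in the form we need it: \Cref{prop:circ} says exactly that whenever $M$ and $M'$ are related by a sequence of firings, $f_C(M) = f_C(M')$. Thus the generating relation is contained in the relation $R_C$ given by ``$f_C(M) = f_C(M')$.'' Since $R_C$ is itself an equivalence relation (equality of integers being reflexive, symmetric, and transitive), and since the family relation is by construction the smallest equivalence relation containing the generating relation, the family relation is contained in $R_C$. Hence any two markings in a common family have the same $f_C$-value, and since $C$ was arbitrary, the token count on every circuit agrees for all markings of the family, which is the claim. Equivalently, and more concretely, I would note that two markings $M$ and $M'$ of the same family are joined by a zigzag $M = M_0, M_1, \dots, M_n = M'$ in which consecutive markings are related by a sequence of firings (in one direction or the other); applying \Cref{prop:circ} to each link and using the symmetry of equality of $f_C$ propagates $f_C(M_0) = f_C(M_1) = \cdots = f_C(M_n)$ along the whole chain. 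And if one prefers to take a family to be the set of markings reachable from a single marking $M_0$, the argument collapses to: $f_C(M) = f_C(M_0)$ for every member $M$ by \Cref{prop:circ}, so all members share the common value $f_C(M_0)$.

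I do not expect a substantive obstacle here; the only thing requiring care is the bookkeeping around the definition of ``family.'' One must use precisely the equivalence relation for which \Cref{prop:circ} supplies the needed comparison, and, if reachability among live and safe markings is not being assumed symmetric, route every comparison through a common reference marking (or along a zigzag) rather than attempting to compare two arbitrary family members by a single firing sequence. No property of live and safe markings beyond \Cref{prop:circ} itself enters the proof.
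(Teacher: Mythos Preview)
Your proposal is correct and matches the paper's approach: the paper does not give a standalone proof of \Cref{prop:fams} but treats it as an immediate consequence of \Cref{prop:circ} together with the definition of a family as an equivalence class under reachability by firings (see also the appendix, where the same claim is restated as \Cref{prop:const}). Your zigzag bookkeeping is more careful than necessary here, since in this setting reachability among live markings on a strong graph is already symmetric (Theorem~12 of \cite{71MarkedGraph}, recorded as a proposition in the appendix), so any two markings in a family are directly connected by a firing sequence; but the extra generality does no harm.
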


\subsection{Multiple families of markings}\label{sec:3.1}\vspace*{-10pt}
\captionsetup[figure]{font=small,skip=0pt}
\begin{figure}[H]
 \includegraphics[height=2.3 in]{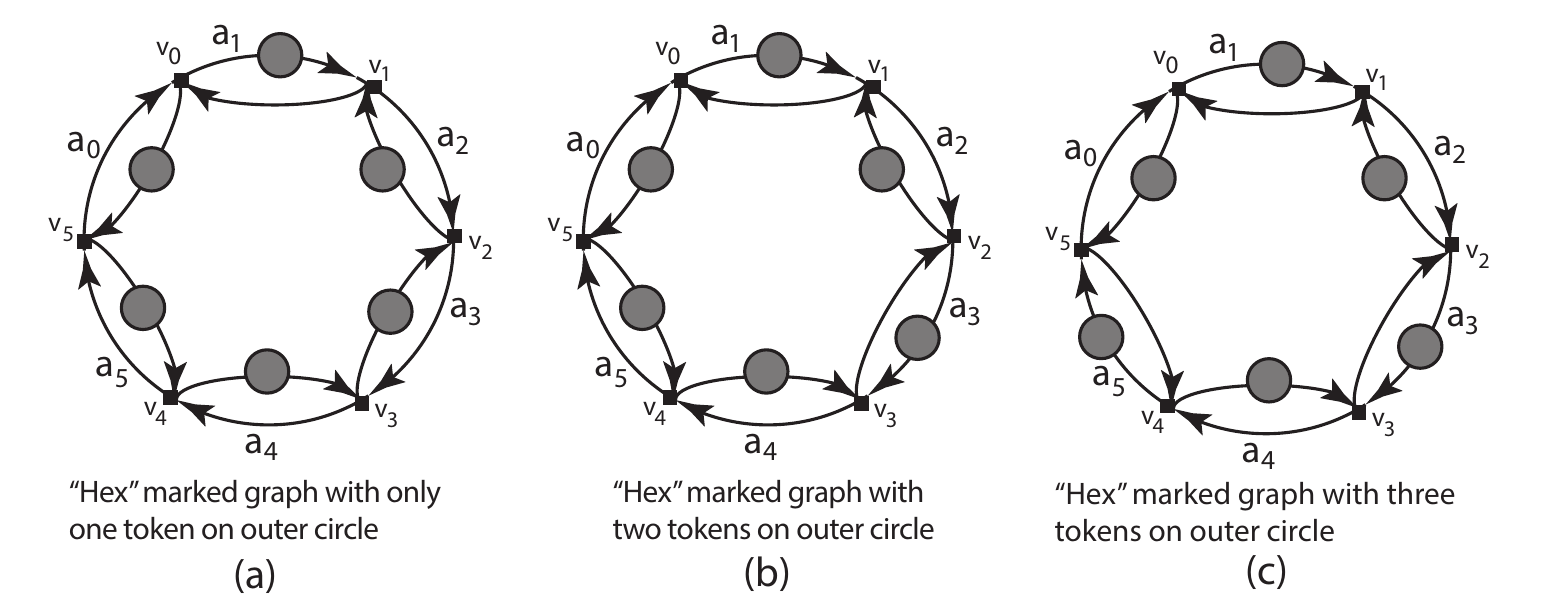}
 \caption{``Hex'' graph showing 3 markings in distinct families. \label{fig:Hex3} }
\end{figure}
\noindent As illustrated in \cref{fig:Hex3}, some graphs admit more than one
family.  The three markings belong to three distinct families, as follows from \Cref{prop:fams}, because the token count (i.e., the count of tokens) of the outer
(and also the inner) circuit differs in the three cases. The simplest such marked
graph is shown in \cref{fig:2triangle}.
  \begin{figure}[H]\hspace*{1.4 in} 
 \includegraphics[height=1.5 in]{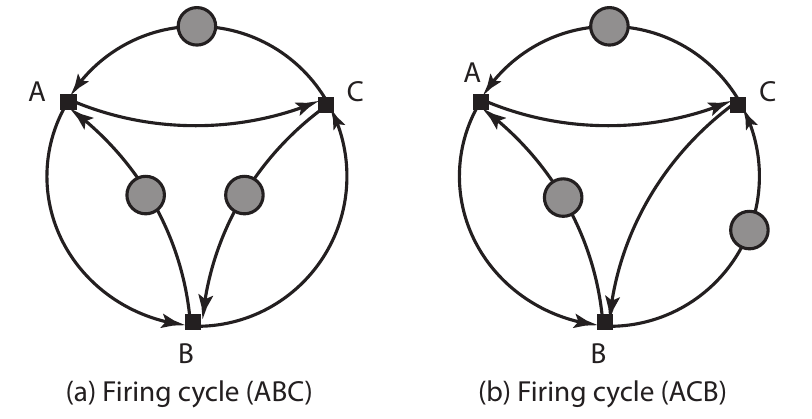}
 \caption{\label{fig:2triangle}Families with opposing cyclic orders of firing.}
\end{figure}
\noindent{\bf Remark}:  The same topology as that of \cref{fig:2triangle}
appears in the flip-flop
shown in \cref{fig:ffmm}, with its two live and safe families and
its two simple circuits that put nodes in opposite cyclic orders, i.e., (A, ENV, B) and (B, ENV, A).  An
interesting question is whether digital designs can avoid multiple marking
families, and if so, at what cost in terms of the number of logical operations required.
\captionsetup[figure]{font=small,skip=10pt}
\begin{figure}[H]\hspace*{1.2 in}
 \includegraphics[height=1.3 in]{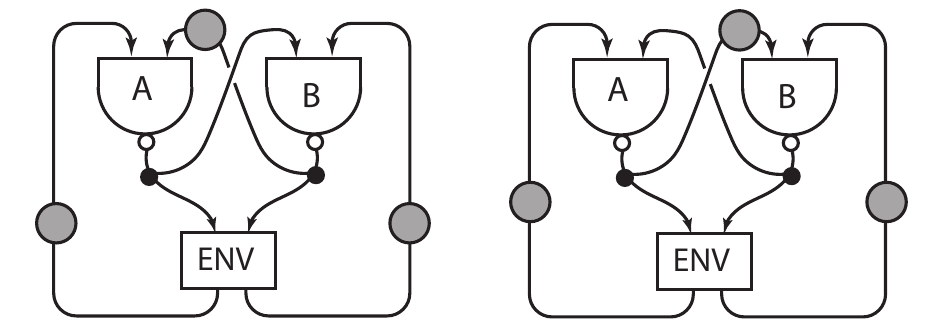}
 \caption{ Markings belonging  to different families. \label{fig:ffmm} }
\end{figure}

The markings in a family can be viewed as vertices of what is called a state-transition graph (we say vertices rather than nodes to distinguish state-transition graphs from graphs for computational networks). Exactly  when a marking $M_2$ can be reached from a marking $M_1$ by firing a node of the  marked graph, the state-transition graph has an arrow from $M_1$ to $M_2$.  Examples for two families are shown in \cref{fig:HexMark4,fig:Hex2B}.
\captionsetup[figure]{font=small,skip=0pt}
\begin{figure}[H]\hspace*{.2 in}
 \includegraphics[height=3.1 in]{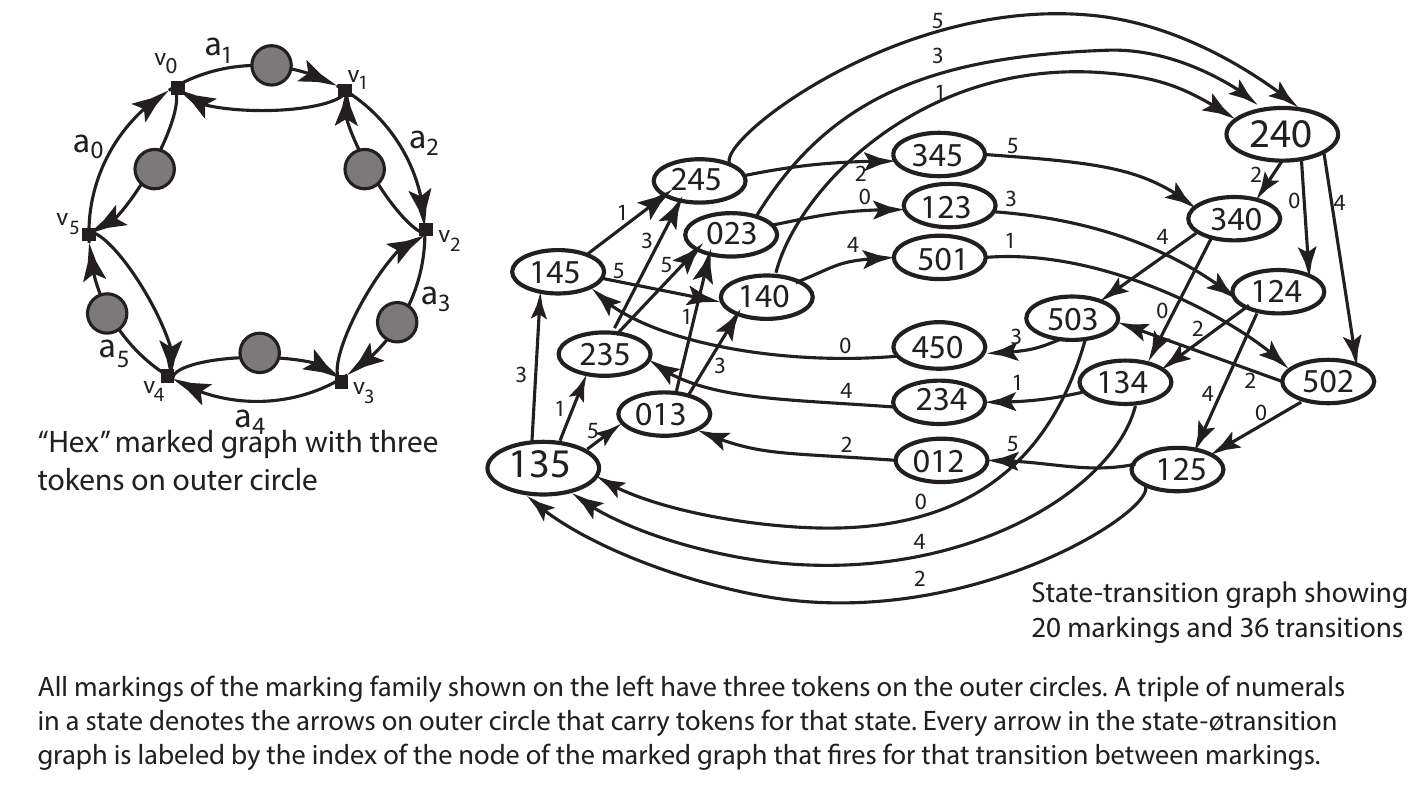}
 \caption{\label{fig:HexMark4}State-transition graph for ``Hex'' with {\bf 3} tokens on its outer circuit.}
\end{figure}
\begin{figure}[H]\hspace*{.81 in}
 \includegraphics[height=2.9 in]{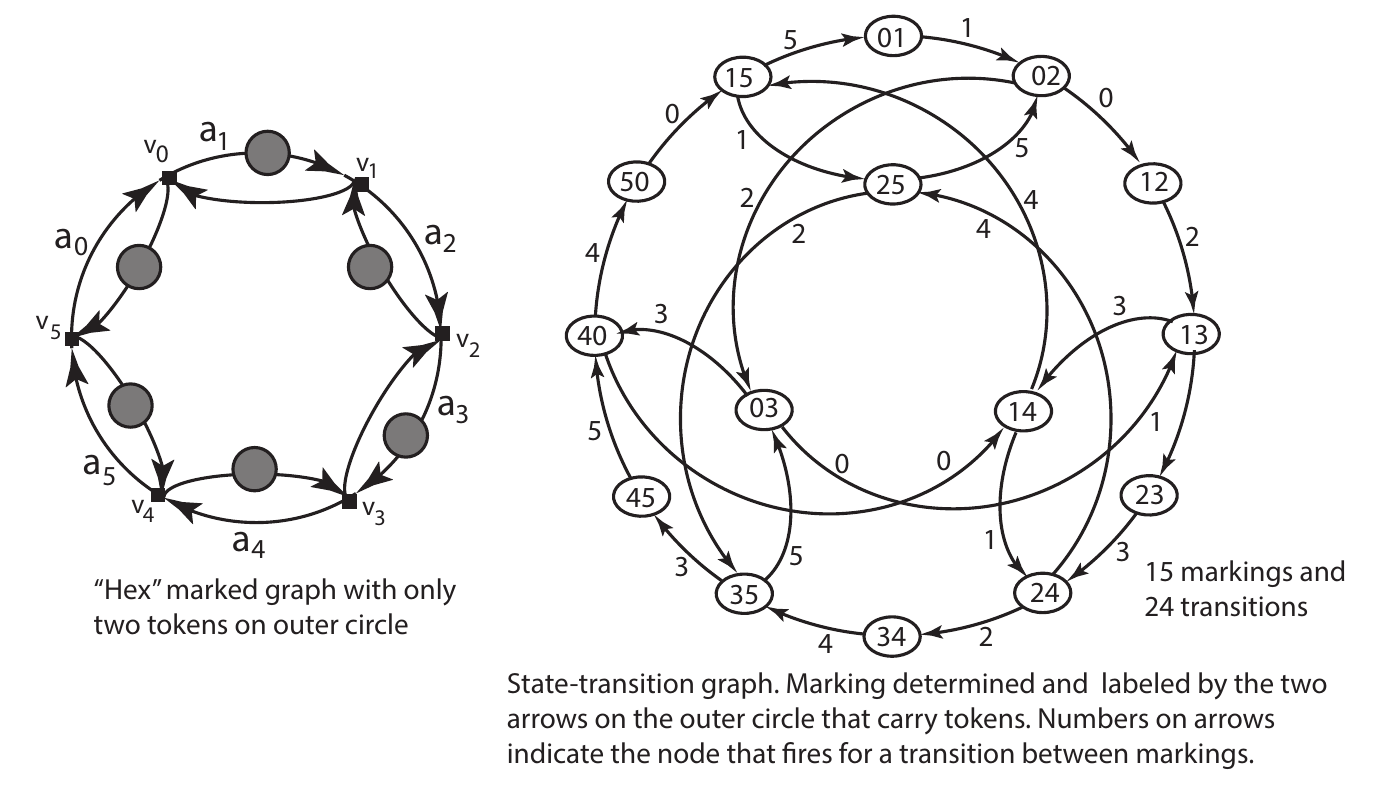}
 \caption{\label{fig:Hex2B} State-transition graph for ``Hex'' with {\bf 2} tokens on its  outer circuit.
 }
\end{figure}
\noindent For both families,  \cite[Theorem 7]{71MarkedGraph}  implies that the state-transition graphs have simple directed circuits through firings of all the nodes of the corresponding marked graph.
\Cref{fig:HexMark4,fig:Hex2B} each show several such circuits.

In many cases, different families for the same graph represent
computations of different functions.  The existence of graphs that support more
than one family shows that a family conveys information not
conveyed by the underlying graph.  One can ask: what
conditions on a strong graph are necessary and sufficient for the existence of
more than one family for that graph?  See the next subsection for
part of the answer, which we have not seen elsewhere.

\subsection{Conditions for the existence of multiple families}\label{sec:3.5}
The issue now arises of characterizing graphs that admit multiple families vs. those
that admit only one family.  To demonstrate a necessary condition for multiple
families, we must first establish a necessary condition for a simple directed
circuit $C$ in a graph $G$ with a live and safe family to have a token count of
$t>1$.  This condition involves basic circuits, which are simple directed circuits
with token count of 1 for the family considered. For safety, each arrow in $C$ must
also belong to a basic circuit in $G$.  Hence $C$ must be covered by a set
of at least $t$ basic circuits, and each basic circuit must contain an
arrow of $C$ not contained by any other basic circuit.  We call a covering with
this property {\it proper}.  For the $G$-marking to be live as well as safe, the
proper covering of $C$ must include  $m\ge t+1$ basic circuits, so that
at least one of them can be marked on an arrow not in $C$, making at least one node of $C$ fireable.
For any proper covering by $m$ basic circuits there is at least one set of  $m$
nodes of $C$ such that each node belongs to a distinct pair of basic circuits.  Let $S$ be such a set. (In some cases $C$ contains nodes not in $S$, but this does not matter.)
  The $m$ nodes of $S$ are cyclically ordered by the arrows of $C$.
Running counter to $C$, there is a simple directed circuit contained in the
proper covering, the arrows of which go through the nodes of $S$ in the cyclic
order opposite to that for $C$.  We call this opposing circuit $C_{\text{opp}}$.
To summarize, we have
\begin{prop}\label{prop:prep}
If $C$ is a simple directed circuit with token count $t>1$ in a graph $G$ with a live and safe family, then:
\begin{enumerate}
\item $G$ must contain a proper covering of $C$ by  $m \ge t+1$ basic circuits,
\item $C$ contains a set $S$ of $m$ nodes of $C$ that are cyclically ordered by arrows of $C$. 
\item The graph $G$  contains a second simple directed circuit $C_{\text{opp}}$ containing the $m$ nodes of $S$ cyclically ordered in the direction opposite to that of $C$.
\end{enumerate}
\end{prop}
If a pair of simple 
directed circuits cyclically order 3 or more nodes in opposite directions, we call them {\it opposing} simple directed circuits.
We can now state a condition necessary for a graph $G$ to contain more than one live and safe family.

\begin{prop}\label{prop:yes}
If a graph admits more than one live and safe family, then the graph contains opposing simple directed circuits. 
\end{prop}

\begin{proof}
Suppose that a graph admits two distinct live and safe families, say
$\mathcal{M}$ and $\mathcal{M}'$.  By Theorem 12 of \cite{71MarkedGraph}, one
family, say $\mathcal{M}$ must place more tokens on some simple directed circuit
$C$ than does the other family $\mathcal{M}'$.  By Theorem 1 of
\cite{71MarkedGraph}, $\mathcal{M}'$ must place at least one token on $C$, and
so $\mathcal{M}$ must place at least two tokens on $C$.  By \Cref{prop:prep},
this implies that the graph contains opposing simple directed circuits. (An alternative proof can be found in \cite{80Murata}.)
\end{proof}

 \noindent {\bf Question}:
Does the converse of \Cref{prop:yes}
hold; i.e., does every strong graph with 
opposing simple directed circuits admit more than one live and safe family?  By filling in an omission in a proof  in \cite{80Murata}, we find the following.
\begin{prop}\label{prop:koh}
  If a strong directed graph $G$ contains opposing directed circuits it admits more than one live and safe family
\end{prop}
\begin{proof}
  Let $M(C)$ denote the token count of a directed circuit $C$ under a marking
  $M$, and recall that we speak of a directed circuit $C'$ in $G$ as basic in
  $M$ if and only if $M(C')=1$. If a strong graph $G$ contains opposing simple
  directed circuits, their covering by basic circuits implies that for any live
  and safe marking $M$ at least one of the opposing simple directed circuits,
  say $C$, has token count $M(C) \ge 2$. Choose an arrow $e$ in $C$ such that
  $M(e)=1$. Since $M$ is safe, $e$ belongs to one or more directed circuits
  basic in $M$. Construct a marking $N$ without any token on $e$ and with 1
  token elsewhere on each directed circuit that is basic in $M$ and contains
  $e$. Clearly $N$ is live. If $N$ is safe, it follows that $G$ admits more than
  one family since $M(C) >N(C)$.  If $N$ is not safe, then by the argument in
  the proof of Theorem 2 of \cite{71MarkedGraph}, there is a sequence of firings
  and token removals that make a live and safe marking $N'$ with only one token
  on $C$.  Then $M(C) > N’(C)$ and $M$ and $N’$ are live and safe markings in
  different families.
  \end{proof}

 \subsection{Logical distance}\label{sec:3.3}
Marked graphs with live and safe families permit the definition of a logical analog of the concept of distance.  The notion of distance used in physics has evolved in the quest for increasing precision.  Rather than taking a meter inscribed in platinum as the defining standard, the meter is now defined by invoking the theory of relativity in which the speed of light allows a new definition: the meter is how far light travels in a specified fraction of a second \cite{SI}.  This definition is inspired by the theory of special relativity, which relates distance to a count of clock ticks between the sending a signal and the receipt of an echo from a distant point \cite{05einstein} (with a practical  implementation in radar \cite{radar}). We define an analogous, purely mathematical {\it logical distance} between nodes of a marked graph with a live and safe family.

\begin{defn}\label{defn:onex}
  Assume that the firing of a node can copy a label on an in-arrow of the node to a label on an out-arrow of the node. Then for any two nodes $v$ and $w$ of a graph with a live and safe family, the {\it logical distance} $D(v,w)$ is the number of times $v$ must fire in order for a label carried by a token on an input arrow of $v$ to reach $w$ and be returned as a label on a token on an input arrow to $v$.  By convention, if $v$ and $w$ are identical, then $D(v,v) = 0$.\footnote{{\it The logical distance} is distinct from {\it path length} in graphs.}
\end{defn}

\begin{prop}\label{prop:ec1}
  Suppose that $v$ and $w$ are any two nodes of a connected graph with a live and safe family.  The graph contains at least one simple directed path from $v$ to $w$ and at least one return simple directed path from $w$ to $v$.  For a marking in the family, let $n_{vw}$ be the least token count of any path from $v$ to $w$, and let $n_{wv}$ be the least token count of any return path from $w$ to $v$.  The logical distance $D(v,w)$ is the sum $n_{vw}+n_{wv}$ of the two  token counts. {\normalfont(See example of nodes $v_0$ and $v_5$ in    \cref{fig:nCy} below.)}
\end{prop}

\begin{proof}
  By \Cref{prop:handy} in the Appendix, the sum $n_{vw}+n_{wv}$ depends only on the family and not on the choice of marking within a family. For a label to progress across a node, the node must fire and copy the label from a token on an in-arrow to a token on an out-arrow.  Because the marking is safe, no token can catch up with any other token, so the label cannot be returned to $v$ until $v$ has fired $n_{vw}+n_{wv}$ times.   Furthermore, after this number of firings of $v$, the label can return to $v$, perhaps via firings of nodes other than $v$, but without any more firings of $v$.
\end{proof}

\begin{figure}[H]\hspace*{.8 in}
  \includegraphics[height=1.9 in]{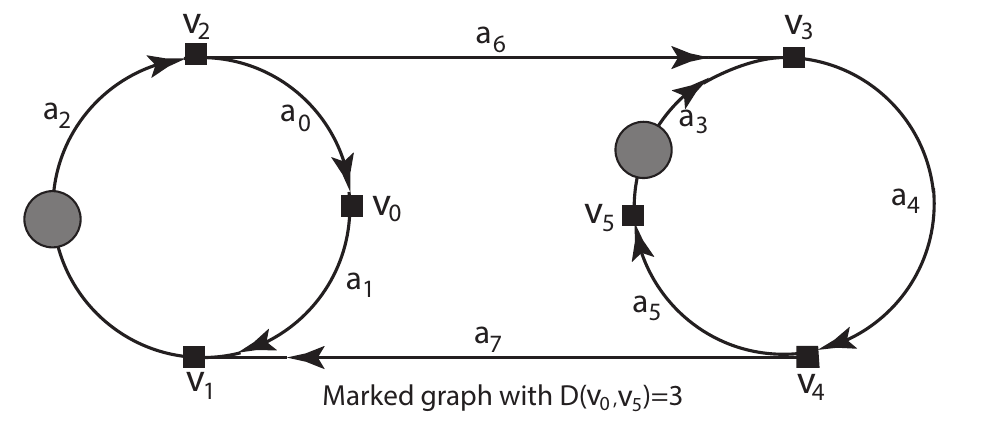}
 \caption{Directed path from $v_0$ to $v_5$ overlaps the directed path from $v_5$ to $v_0$ in arrows $a_2$ and $a_4$.  \label{fig:nCy} }
\end{figure}
In \cref{fig:nCy}, the token count of the path from $v_0$ to $v_5$ is 1, while the token count of the path from $v_5$ to $v_0$ is 2, so the logical distance is 3; the token on arrow $a_2$ is on both paths and is thus counted twice.
\Cref{fig:nCy2} shows an added path that eliminates the need for the overlapping paths  of  \cref{fig:nCy} and  reduces the  logical distance to 1.
\begin{figure}[H]\hspace*{.8 in}
 \includegraphics[height=1.6 in]{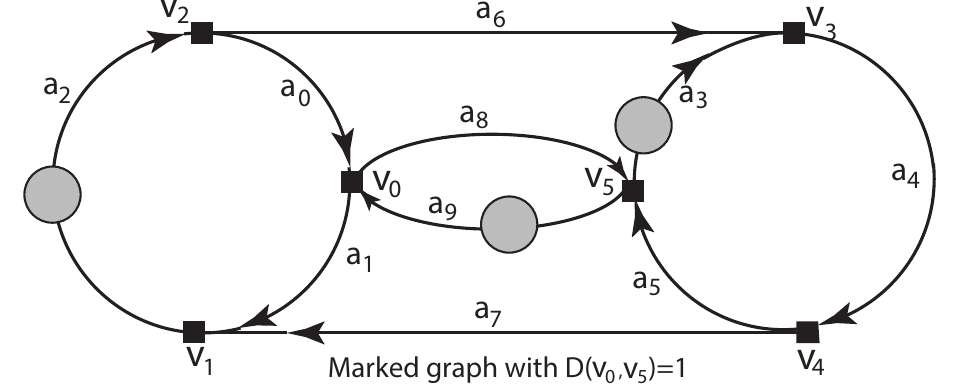}
 \caption{Graph with added path without overlap.  \label{fig:nCy2} }
\end{figure}

\begin{prop}\label{prop:ec2}
  The  logical distance $D$ is a {\it metric} on nodes of a  graph with a given live and safe family.
\end{prop}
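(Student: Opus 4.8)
The plan is to verify the four defining properties of a metric---$D(v,w)\ge 0$ with equality iff $v=w$, symmetry, and the triangle inequality---working from the combinatorial description of $D$ supplied by \Cref{prop:ec1}. That proposition lets me treat $D(v,w)$ as the minimum, taken over all directed paths $P$ from $v$ to $w$ and all directed paths $Q$ from $w$ to $v$, of the combined token count of $P$ and $Q$, an arrow lying on both paths contributing twice. Because the graph is connected and carries a live and safe marking, such $P$ and $Q$ always exist, and by \Cref{prop:ec1} each individual combined count---hence their minimum---is the same for every marking in the family, so $D$ is well defined on the node set.

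First I would dispose of non-negativity and the identity of indiscernibles. Token counts are non-negative, so $D(v,w)\ge 0$, and $D(v,v)=0$ by convention. For the converse, suppose $v\ne w$. Any admissible $P$ and $Q$ are then non-empty, so their concatenation is a non-empty closed directed walk, which contains a simple directed circuit; by liveness that circuit carries at least one token. Since the combined token count of $P$ and $Q$ is exactly the token count, counted with multiplicity, of that closed walk, every admissible sum is at least $1$, and therefore $D(v,w)\ge 1>0$. Symmetry is then immediate: interchanging the two entries carries the set of pairs $(P,Q)$ used to compute $D(v,w)$ onto the set used to compute $D(w,v)$ without changing the combined token count, so the two minima agree.

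The triangle inequality $D(v,w)\le D(v,u)+D(u,w)$ is where the work lies, and I expect the only real obstacle to be that concatenating two paths yields a directed walk rather than a path. I would choose directed paths $P_1\colon v\to u$, $Q_1\colon u\to v$ realizing $D(v,u)$ and $P_2\colon u\to w$, $Q_2\colon w\to u$ realizing $D(u,w)$. The walk $P_1P_2$ runs from $v$ to $w$ and, counted with multiplicity, carries $\mathrm{tok}(P_1)+\mathrm{tok}(P_2)$ tokens; similarly $Q_2Q_1$ runs from $w$ to $v$ and carries $\mathrm{tok}(Q_2)+\mathrm{tok}(Q_1)$. From a directed walk one extracts a directed path with the same endpoints by repeatedly excising the closed sub-walk lying between two visits to a repeated vertex; each excision deletes a closed directed walk, whose arrows hold a non-negative number of tokens, so the token count never increases. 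Applying this to $P_1P_2$ and to $Q_2Q_1$ produces directed paths $P\colon v\to w$ and $Q\colon w\to v$ with $\mathrm{tok}(P)+\mathrm{tok}(Q)\le \mathrm{tok}(P_1)+\mathrm{tok}(P_2)+\mathrm{tok}(Q_2)+\mathrm{tok}(Q_1)=D(v,u)+D(u,w)$, and \Cref{prop:ec1} then gives $D(v,w)\le \mathrm{tok}(P)+\mathrm{tok}(Q)\le D(v,u)+D(u,w)$. The degenerate cases in which $u$ coincides with $v$ or $w$, or in which $v=w$, reduce at once to the facts already established. The points that need care, and that I would write out in full, are the walk-to-path reduction and the bookkeeping of tokens on arrows shared between paths; the rest is routine.
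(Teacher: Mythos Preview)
Your proposal is correct and follows the same skeleton as the paper's proof: positivity for distinct nodes comes from \Cref{prop:ec1} (the paper simply asserts this; you supply the liveness-and-circuit argument), symmetry from swapping the roles of the two paths, and the triangle inequality from concatenating paths. The paper's proof is a three-sentence sketch that writes only ``The triangle inequality follows from concatenating paths,'' whereas you carry out the walk-to-path reduction and the token bookkeeping explicitly; so your argument is the same route, just with the gaps the paper leaves to the reader filled in.
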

\begin{proof} For any connected graph with a live and safe family,
\Cref{defn:onex} makes $D(v,v)=0$ and makes the logical distance between
two distinct nodes positive. By the argument in the proof of
\Cref{prop:ec1}, the logical distance is invariant under swapping the two nodes
and is thus symmetric.  The triangle inequality follows from concatenating bi-directional pairs of directed paths; thus the defining conditions of a metric are fulfilled.
\end{proof}

\subsubsection{The logical distance depends on the family}

\begin{prop}\label{prop:ec3}
  For a given graph, the logical distance between two nodes can depend on the family. 
\end{prop}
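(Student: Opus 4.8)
The statement is a pure existence claim, so the plan is to exhibit one graph, two families of live and safe markings on it, and a pair of nodes whose logical distance differs between the two families. The natural candidate is the ``Hex'' graph of \cref{fig:Hex3}: it has already been shown, via \cref{prop:fams}, to carry (at least) two distinct families of markings --- one placing $3$ tokens and the other $2$ tokens on its outer circuit, with correspondingly different counts on the inner circuit --- and the two families are laid out explicitly in the state-transition graphs of \cref{fig:HexMark4,fig:Hex2B}. So no new combinatorial construction is needed; the work is a finite computation.

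The computational tool is \cref{prop:ec1}: for nodes $v,w$ the logical distance $D(v,w)$ is the least value, over all choices of a directed path $P$ from $v$ to $w$ and a directed path $Q$ from $w$ to $v$, of the total number of tokens lying on $P$ together with $Q$, and by \cref{prop:fams} this total depends only on $P$, $Q$, and the family, not on the particular marking within it. The plan is then: (i) pick two nodes $v,w$ on the outer hexagon --- for concreteness, a diametrically opposite pair, for which the directed $v\to w$ and $w\to v$ paths are few and easily enumerated, each running partway around the outer circuit or partway around the inner circuit; (ii) form the finitely many path pairs $(P,Q)$; (iii) read off, from the explicit markings in \cref{fig:HexMark4,fig:Hex2B}, the token total of each pair in each family, noting that the union of a ``same-circuit'' $v\to w$ path with the matching $w\to v$ path is exactly that whole circuit, so one candidate total is just the outer-circuit count ($3$ versus $2$) and another is the inner-circuit count, while the remaining ``mixed'' pairs supply the other totals; (iv) observe that the minimum over these totals is strictly larger in the $3$-token family than in the $2$-token family, whence \cref{prop:ec1} gives two unequal values of $D(v,w)$.

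The main obstacle --- the only step that is not mechanical --- is (iv): one must rule out that some mixed path pair, taking part of its route along the outer circuit and part along the inner circuit, carries so few tokens in the heavier family that it drags that family's minimum down to the value realized in the lighter family. I expect this to reduce to a short case check over the half-dozen path pairs, carried out directly on the markings in \cref{fig:HexMark4,fig:Hex2B}. If the diametrically opposite pair of nodes happens not to separate the two minima, an adjacent or next-to-adjacent pair on the outer hexagon will; and should one prefer a self-contained argument, one can instead present a purpose-built small graph with conflicting simple directed circuits in the sense of \cref{sec:3.5}, on which the two families and the relevant path counts can be displayed in full and the separation is immediate. Either way, once a pair of nodes with differing minimal token totals is identified, \cref{prop:ec1} finishes the proof.
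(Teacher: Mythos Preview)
Your proposal is correct and follows essentially the same approach as the paper: the paper's proof is a one-line proof by example on the Hex graph of \cref{fig:Hex3}, choosing the diametrically opposite pair $v_0,v_3$ and asserting $D(v_0,v_3)=2$ in the family with two tokens on the outer circuit and $D(v_0,v_3)=3$ in the family with three. Your plan is the same example with the same node pair; you are simply more explicit than the paper about invoking \cref{prop:ec1} and about the need to check the mixed path pairs, which the paper leaves implicit.
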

\begin{proof}[\nopunct]{\it Proof  by example}: By \Cref{prop:fams}, the markings shown
in \cref{fig:Hex3} belong to distinct families.  The family that puts two tokens
on the outer circuit leads to $D(v_0,v_3) = 2$, while the family that puts three
tokens on the outer circuit leads to $D(v_0,v_3)=3$.
\end{proof}

\subsection{Coarse representations with generalized nodes and labels}\label{sec:3.2}

One often wants to suppress detail.  At the finest level of representation, our marked graphs contain only the simple nodes NAND, FORK, and ENV,
with token labels 0 or 1, and such graphs can be huge and confusing.  To suppress detail, one ``modularizes'' a graph by contracting fragments of it into  single nodes, resulting in a coarser level of representation.\footnote{Contractions of graphs are a
special case of graph morphisms.  In contrast to an unmarked graph, contracting
a marked graph requires paying attention to the effect of the contraction on markings
and their labels.}  After contracting a fragment to a node of a coarser graph,
the labeling rule for the new node can define a complex logical operation on
strings of bits rather than just single bits.

When focusing on the motion of logic, we sometimes omit token labels;  \cref{fig:condense1} shows contractions
with token labels omitted.
\begin{figure}[H]\hspace*{.5 in}
 \includegraphics[height=1.7 in]{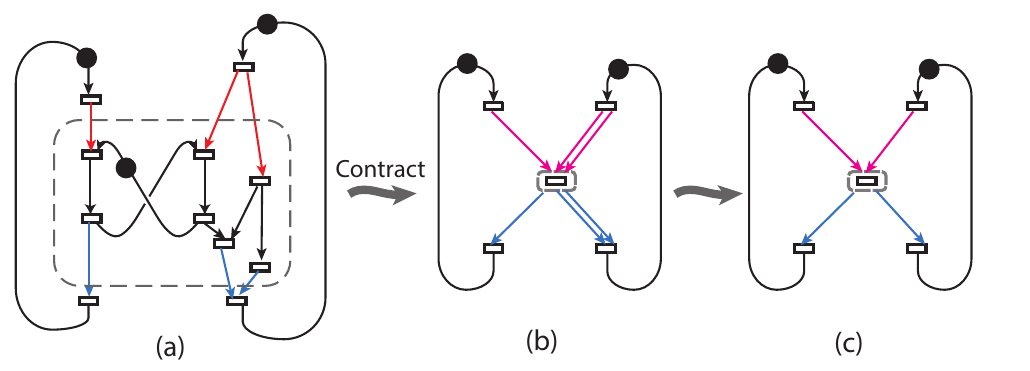}
 \caption{(a) A marked graph with a subgraph within the dashed boundary;  (b) the contraction of a subgraph to a node; (c) further contraction.\label{fig:condense1}}
\end{figure}
\noindent Labeling rules for contractions can involve memory in addition to
inputs and outputs.  Allowing multigraphs with loops, rather than the
simple graphs usual for Petri nets, simplifies contractions.  
\Cref{sec:progControl} uses contractions in showing how
marked graphs can express the ``if-then-else'' construct used in computer programs.

\subsection{Representing changes in computational networks}\label{sec:revise} 
 Per ``no final answers,'' surprises lead to revisions in human-made computational networks, and analogous needs for revision are endemic to computational networks found in all living organisms.  Although surprises defy mathematical prediction, a change in a computational network in response to a surprise can be expressed mathematically as a change in a live and safe marked graph.

 Changes in live and safe marked graphs can  express evolutions of the equations of physics, including drastic changes, as occurred with the introduction of quantum theory. Equations used in predictions and other computations are inscribed in computational networks, and a change in the equation used implies a change in the logic of the computational network and, hence, a change in the marked graph that expresses that  logic.  Therefore, changes in equations of physics, as well as changes in the logical  structure by which all living organisms function,  can be represented by changes in marked graphs with live and safe markings.  The suitability of marked graphs as mathematical tools for expressing such changes is a major  motivation for our interest in them.

 \Cref{fig:xmit} illustrates two types of change: (1) the fusion of nodes of different computational networks into a single node to create a single computational network and (2) the fission of a single node along with arrow additions to create a more complex computational network.
 \begin{figure}[H]
  \hspace*{1 in}
 \includegraphics[height=4.3 in]{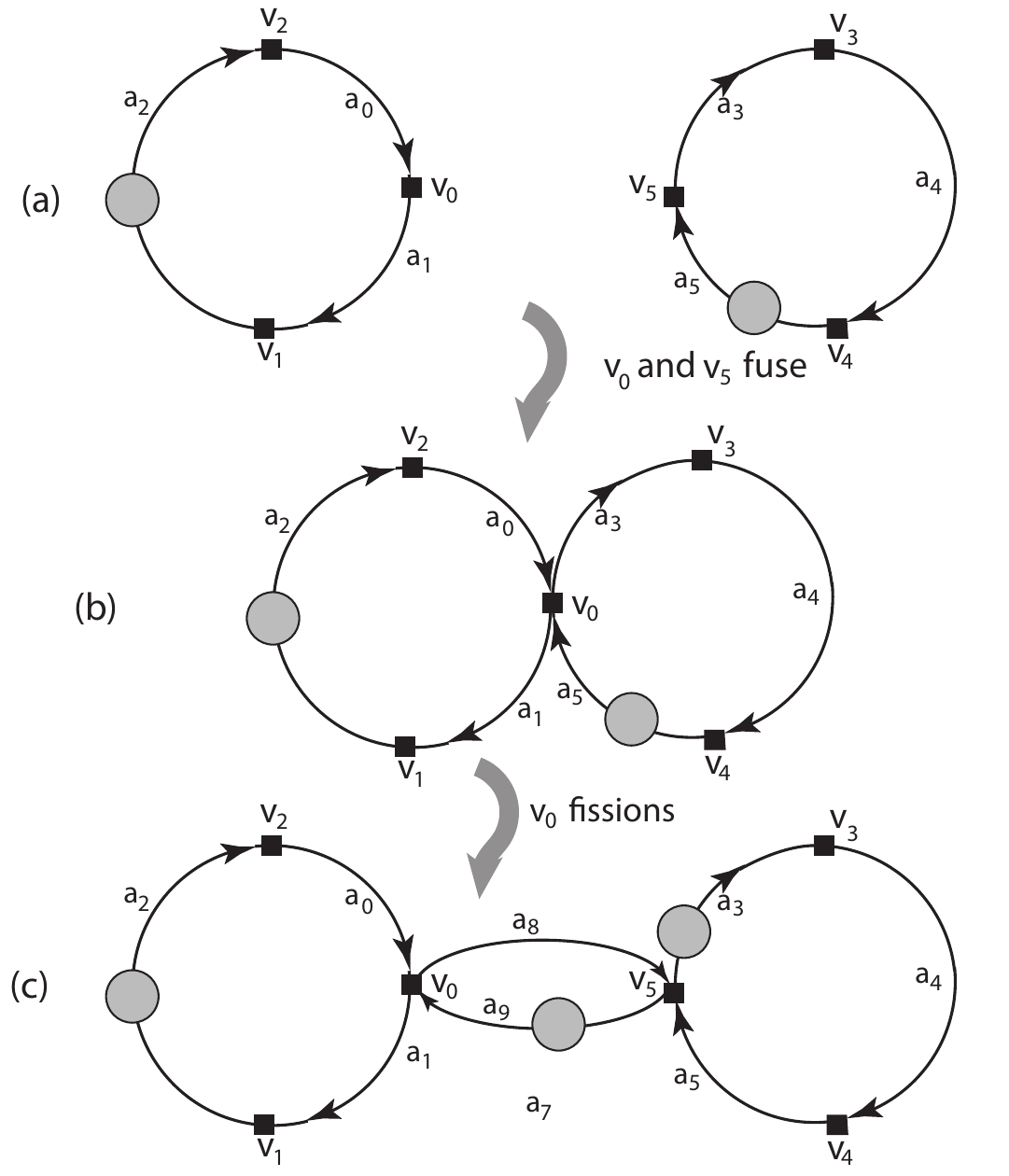}
 \caption{Fusion and fission of computational networks.\label{fig:xmit} }
\end{figure}
\noindent 

While changes in graphs without token games have been well studied \cite{edit}, the relevant changes correspond to operations that
\begin{enumerate}
\item enlarge a graph $G$ to make a new graph $G'$ {\it in such a way that the live
  and safe family for $G$ is a restriction of the live and safe family for
  $G'$}, or 
  \item eliminate nodes and/or arrows of a larger $G'$ to make a smaller $G$ with a live and safe marking derived from that of $G'$.
\end{enumerate}
Some such changes have been discussed \cite{80Murata,81Murata}.
We hope to report later on other types of changes that preserve liveness and safety. In \cref{sec:locAssemble} below we note a related question of the availability at a node of  information indicating a change  in the graph near that or another node.

\section{Marked graphs and the local physics of computation}\label{sec:4}
Considering {\it computation as measurable physical behavior}, we ask the following question. What evidence from measurements of a computational network can determine the graph and the family that represent it?  A first thought regarding obtaining evidence could be to take snapshots of the computational network and relate those snapshots to markings of the graph, as in \cref{fig:first}. However, there is an obstacle, which was alluded to in \cref{sec:3}. Although a sequence of unambiguous snapshots can record a chess game, the absence of ambiguity depends on moments between moves when all the pieces sit still on the game board.  In computational networks, while every individual gate performs sequentially, there need be no moment when all gates are between moves, as illustrated in \cref{fig:Hex}.
\begin{figure}[H]\hspace*{.8 in}
  \includegraphics[height=4.3 in]{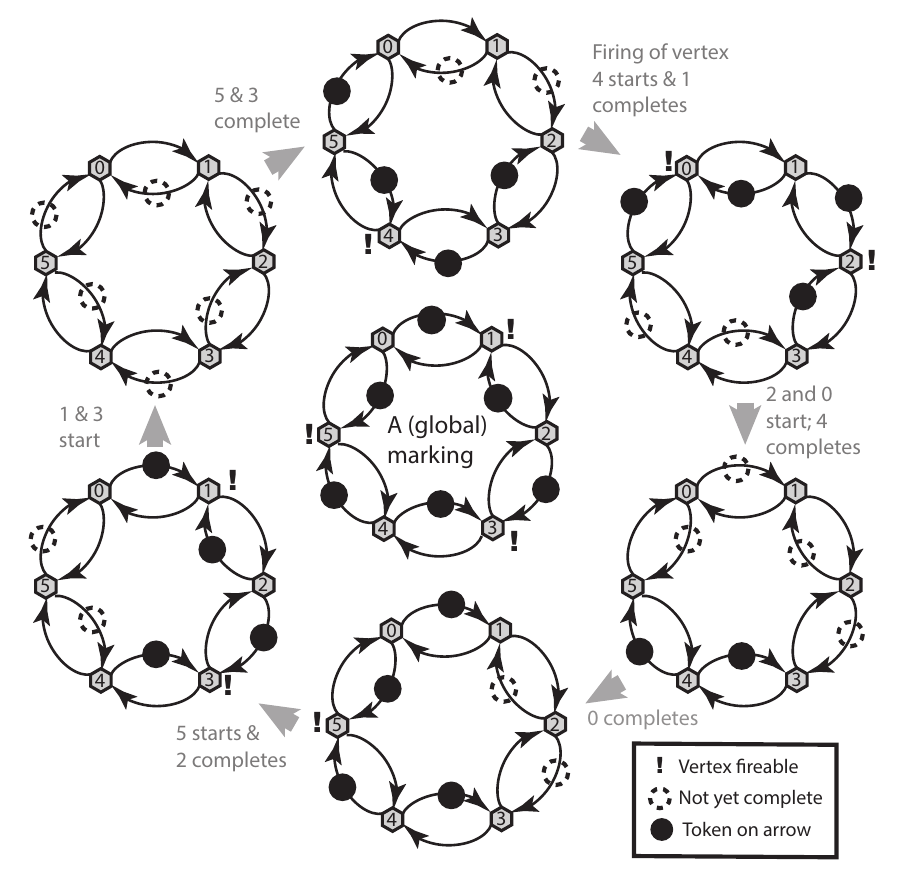}
 \caption{A token game that allows no global snapshots.\label{fig:Hex}
    }
\end{figure}

The center of \cref{fig:Hex} shows a possible marking of the ``Hex'' graph, but
the surrounding snapshots illustrate a play of a token game in which some node
is always in the midst of firing so that the token positions never correspond to
any marking, even though the node firings conform to the firing rule.  The
marking at the center of \cref{fig:Hex} makes three nodes {\it concurrently
  fireable}, obstructing any correspondence between markings and snapshots
of the physical network.  However, a correspondence between such a computational
network and its representation by a marked graph can be demonstrated through the use of a
suitably coordinated set of {\it local snapshots}, as we now discuss.

\subsection{When global snapshots fail, local records work}\label{sec:locAssemble}
When global snapshots fail, local records can be organized to determine the
graph and the family for a computational network, at the expense of endowing nodes with additional functions.  For example, let each node
make a record each time it fires.  The record includes a count of the node's own
firings and its own name.  The node writes these records on the labels of the
tokens that it produces so that its own record of a firing can include the
names and counts that come in the labels on its input tokens.  Without the
counts, local records (once collected) determine the graph but not the family.
With the counts, the marking class can be determined. For the example, local records suffice to distinguish among
the three families shown in \cref{fig:Hex3}.\footnote{When more than one arrow
connects two nodes, the arrow names must also be included in the local records.}
As an explicit example, we call the family for which the outer circuit has a token count of 2
and the inner circuit has a token count of 4 the {\it 2-4 family}, and similarly for
the 3-3 \vspace*{8pt}family.

{\bf \hspace*{.753 in}
  2-4 family \hspace*{1.07 in} $\Big|$ \hspace*{.75 in} 3-3 family}\\
\setlength{\textwidth}{2.75 in}
\noindent\fbox{\begin{minipage}{\textwidth}
    $V_1$\#1; source $V_2\#0, V_0\#0$; target $V_2,V_0$
\end{minipage}} $\Big|$ 
\fbox{\begin{minipage}{\textwidth}
    $V_1$\#1; source $V_2\#0, V_0\#0$; target $V_2,V_0$
\end{minipage}}\\
\setlength{\textwidth}{2.75 in}
\noindent\fbox{\begin{minipage}{\textwidth}
    $V_3$\#1; source $V_2\#0, V_4\#0$; target $V_2,V_4$ 
\end{minipage}} $\Big|$ 
\fbox{\begin{minipage}{\textwidth}
    $V_3$\#1; source $V_2\#0, V_4\#0$;  target $V_2,V_4$
    \end{minipage}}\\
\setlength{\textwidth}{2.75 in}
\noindent\fbox{\begin{minipage}{\textwidth}
    $V_2$\#1; source $V_1\#1, V_3\#1$; target $V_1,V_3$
\end{minipage}} $\Big| $
\fbox{\begin{minipage}{\textwidth}
 $V_2$\#1; source $V_1\#1, V_3\#1$; target $V_1,V_3$
\end{minipage}}\\
\setlength{\textwidth}{2.75 in}
\noindent\fbox{\begin{minipage}{\textwidth}
    $V_4$\#1; source $V_3\#1, V_5\#0$; target $V_3,V_5$ 
    \end{minipage}} $\Big|$ 
\fbox{\begin{minipage}{\textwidth}
$V_4$\#1; source $V_3\#1, V_5\#1$; target $V_3,V_5$ 
    \end{minipage}}\\
\setlength{\textwidth}{2.75 in}
\noindent\fbox{\begin{minipage}{\textwidth}
    $V_5\#1$; source $V_4\#1, V_0\#0$; target $V_2,V_0$ 
\end{minipage}} $\Big|$ 
\fbox{\begin{minipage}{\textwidth}
 $V_5\#1$; source $V_4\#0, V_0\#0$; target $V_2,V_0$
\end{minipage}}\\
\setlength{\textwidth}{2.75 in}
\noindent\fbox{\begin{minipage}{\textwidth}
    $V_0$\#1; source $V_5\#1, V_1\#1$; target $V_5,V_1$ 
    \end{minipage}} $\Big|$
\fbox{\begin{minipage}{\textwidth}
     $V_0$\#1; source $V_5\#1, V_1\#1$; 
    target $V_5,V_1$
\end{minipage}}\vspace*{8pt}

\noindent  The local records link markings logically, not temporally.  The logical links determine the families, shown in the state-transition graphs of figs.\  \ref{fig:HexMark4} and \ref{fig:Hex2B}.

Whether a computational network is human-made digital or an evolving biological network, a change that is significant locally may or may not be significant at locations some radar distance away.
As an avenue to explore, we are interested in the question: how can records available at one node reflect changes made at other nodes,whether  close by or more distant?

\subsection{The physical undecidability of the temporal order of concurrent events}\label{sec:4.2}

Deciding physically on the temporal order of concurrent events has a known
trouble that challenges the very concept of a time coordinate that can be valid
over a computational network.  The trouble is the occasional inconsistent
assignments of temporal order. For example, in the computational network
represented by \cref{fig:Hex}, there can be three nodes firing more or less at
once, corresponding to the three concurrently fireable nodes.  Determining
temporal order is then the judging of a three-way race.  As  shown in \cite{19MST}:
\begin{enumerate}
\item Physically measuring the order of occurrence of three events $A$, $B$, and $C$---such as three token firings---requires measuring pairwise orders, i.e., between $A$ and $B$, between $A$ and $C$, and between $B$ and $C$.
\item Because of uncertainty in pairwise decisions, attempts to assign temporal order to three concurrent firings can violate the transitivity of an order relation by outcomes of $A$ before $B$, $B$ before $C$, but $C$ before $A$ rather than $A$ before $C$.
\end{enumerate}

Another factor that complicates the physics of temporal ordering is  the ``time required to decide.''     The mechanical and electronic devices used to decide ``which came first''  depend on {\it balancing one effect against another}. These devices can be built to eventually decide, but their balancing can linger, like a tossed coin that lands on edge before eventually falling one way or the other \cite{glitch}.   The statistics of these devices exhibit a {\it half-life of indecision}, meaning that the probability of lingering indecision decreases by half after every successive elapse of the half-life.  An example of measured half-life is described in \cite{05aop}.  In a footnote to his 1905 paper introducing special relativity, Einstein explicitly mentioned the gap separating his concept of time from experimental experience:
\begin{quote}
The inexactness that lurks in the concept of the
simultaneity of two events at (approximately) the same place
must be bridged by an abstraction that will not be discussed here  \cite{05einstein}.
[Our translation.]\end{quote} 
\section{Computational networks in living organisms}\label{sec:5}
It seems safe to assume that brains compute, but what else they do is less
clear.  The lack of clarity stems in part from ignorance about how brains
function, but that lack is compounded by vagueness in defining {\it
  computation}.  We have tried to clarify computation by defining computation as
representable by live and safe marked graphs.  On that basis we propose to view
cognition in all organisms, from the basal cognition of bacteria to human
thinking, as involving both computation and something unpredictable beyond
computation, represented by unpredictable changes in live and safe marked
graphs.

We contrast our modeling of computation by computational networks with the modeling of brains, a long-standing topic which has persisted from McCulloch and Pitts \cite{McPitts} to the present \cite{18neurCompRev}.  Rather than modeling brains, we model computation itself, whether done by electronics or by a living organism.  As meant here, the elements of computational networks by which we model computation are not anatomical elements but logical elements.  We avoid discussing how these logical elements relate to anatomy.

We suggest that living organisms exhibit multiple computational networks.
Examples include: (1) the logically connected biochemical reaction steps of the
citric acid cycle (and its reverse) \cite{lane}, (2) neuronal networks with their
interacting rhythms \cite{buzsaki}, (3) physiological networks that regulate
heart beat and breathing.  Most of these networks undergo changes, some of which
are unpredictable.  These include:
\begin{enumerate}
\item   Hebbian synchronization during neuronal development \cite{hebb};
\item  the evolution of species via transposable genetic elements, so that one species, instead of having to have independently evolved a certain computational network, can, so to speak, borrow it;
\item  changes via genetic shifts in response to environmental pressures.
\end{enumerate}

Where we say computation, many authors say ``information processing,'' with {\it information} referring to Shannon's ``Mathematical Theory of Communication'' \cite{shannon48}.  In that theory, information is conveyed over given {\it channels}, but the theory does not address the construction, maintenance, and dissolution of channels.  If we view a computational network as a network of channels, then computation consists in information flowing over channels, and the part of cognition beyond computation corresponds to the unpredictable construction, maintenance and dissolution of channels. As an example of channel re-arrangement,  the slime mold {\it Physarum polycephalum} conveys information via locally oscillating cytoplasm flowing through channels formed of actin and myosin  \cite{23slime}.  These channels  are in a continual flux of assembly and disassembly.  In the marked-graph  representation of computation as information processing, channels are expressed by arrows of live and safe marked graphs, and changes in channels are expressed by changes in graphs that re-arrange the arrows.

\subsection{Modes of operation in computational networks}\label{sec:modebio}
The modeling of computational networks by marked graphs subject to extra-logical revisions yields a technical result of likely biological relevance. A directed graph of logic gates and their interconnections describes the static  structure of a computational network, but more is needed to describe its function.  As discussed in \cref{sec:3.1,sec:3.5}, most marked graphs that represent computational networks  have multiple families, and these express multiple modes of operation, even when only  a single mode of operation is desired.  A physiological example is in figure  11.3 of {\it Rhythms of the Brain}, captioned ``Multiple excitatory  glutamatergic loops in the hippocampal formation and associated structures'' \cite [p.\ 286]{buzsaki}. The graph in that figure shows nodes in opposing  simple directed circuits with cyclic orders $(RE,3,S)$ and its opposite  $(RE,S,3)$.  By \Cref{prop:koh}, this implies multiple families and hence  multiple modes of operation.  An analogy is waking up and being momentarily disoriented: seeing things ``wrongly.'' While graphs without markings are used  widely in biology, only the distinct families possible for marked graphs distinguish among multiple modes of operation.  

\subsection{Logical distance in biological computing}\label{sec:5.2}
An additional technical result is the concept of logical distance, defined
\cref{sec:3.3}.  {\it Logical distance} is the number of firings of a node that
must occur for a label produced by the node of a computational network to
circulate to a distant node and come back to the sender.  We conjecture that
logical distance underpins the evolution of both animal navigation and the
concepts of time and distance in people.

\subsection{Logical synchronization in human conversation and thought}
I (JMM) like logical synchronization as a metaphor for ``connecting'' in human conversation.  Sometimes a friend and I are ``synced'': we hear each other.  At other times, no one hears what anyone else is saying.  In both digital technology and human conversation, ``clicking into synchrony'' so that communication works is unpredictable and cannot be hurried. I picture my thoughts as comprising  fragments of computational networks in a flux of unpredictable modifications and interactions, coming into and out of logical synchronization.  At unpredictable moments, fragments of networks of thought can meet and fuse with one another, possibly to fission later, responding sometimes to spontaneous inner life and at other times to external stimuli.

\section{Discussion}\label{sec:6}
The demonstration of essential guesswork draws theoretical attention to the experimental investigations of the distinct physical conditions on which computational networks depend.  For a prime example, logical synchronization of a computational network requires  frequent adjustment of clock rates, guided by guesswork that anticipates unpredictable deviations.   Contrary to a linear ``begin--end'' view of computation, computational networks operate in a cycle of guess, test, and debug.  By denying ``final answers,'' the proof of guesswork elevates debugging to the level of a necessity.  By  recognizing the unpredictability of changes, this cycle is an analogy for both the scientific approach and biological evolution.

We represent computational networks by graphs with live and safe families. Computations must {\it move} but the mathematically defined marked graphs expressed in formulas sit still on the page. Throughout physics, the gap between physical motion and mathematical stillness is bridged by interpretations beyond the expressive power of mathematics, as when a succession of snapshots evokes a sense of motion.  We call attention to the demonstration in \cref{sec:4} of a computational network that admits no sequence of unambiguous global snapshots but can be related to a marked graph by local snapshots.

Our formulation of computational networks is deliberately restricted to their
logical aspects, making computational networks and their extra-logical changes
suitable for characterizing cognition in living organisms. Changes in the
corresponding live and safe marked graphs are a promising topic for further
investigation \cite{80Murata,81Murata}.

Guesses are sometimes private to a person, but they occur in social contexts.  Of the guesses that shape my life, I (JMM) borrow many from other people, living or dead.  I associate with guesses a social role assigned by D. W. Winnicott to ``illusory experience'' as a ``natural root of grouping among human beings'' \cite[Chapter 1.]{Winn}.  From our proof of the necessity of guesswork, what is fairly called illusory is not the difference among guesses that shape different groups, but the assumption that any one guess has an absolute claim to being ``right.''  We are indebted to Kenneth Augustyn for the thought that what is real to each of us is partly shared but in some other and often the most vivid parts highly personal \cite{ken}, leading to a sometimes dramatic tension between the shared and the personal.

\section*{Acknowledgments}
We learned about marked graphs from Anatol W.\ Holt and Frederick Commoner, and
drew extensively on \cite{71MarkedGraph}.  We are indebted to the late
C.\ A.\ Petri for spurring us to think hard about clocks in relation to
logic. We are indebted to Samuel J. Lomonaco, Jr.\ for helpful discussions and, in particular, 
 for pointing us to multigraphs.  We thank Kenneth A. Augustyn for
substantial help in organizing this paper.  We also thank David Mumford, Marcus
Appleby, and Johann Summhammer for comments on an earlier draft.  We are
grateful to Gy\"orgy Buzs\'aki for explaining one of his graphs of neural
connections.  We thank Stefan Haar for introducing us to
partial cyclic orders and for insightful comments.
\appendix

\section{Appendix: Mathematics of live and safe marked graphs}\label{sec:G}
Most of what we have to say about marked-graph mathematics is borrowed from \cite{71MarkedGraph}, in which a {\it marked graph} is defined as a finite multigraph (without loops) along with markings related by a firing \vspace*{3pt}rule.
We augment such marked graphs by labeling tokens according to the labeling rules for  NAND and FORK, as in \cref{fig:1a,fig:1b} (labels on out-arrows of ENV are \vspace*{3pt} {\bf unpredictable}).\footnote{Marked graphs with labels are a specialization of colored Petri
nets \cite{96Petri}, except that we allow multigraphs with loops.}
In addition we make the following two changes.
\begin{enumerate}
\item We allow loops in multigraphs, and we claim that the proofs in \cite{71MarkedGraph} carry over to multigraphs with loops.
\item In emphasizing ``snapshots,'' we explicitly flag as non-trivial the issue of associating timeless mathematical propositions with  the physical motion of logical operations.
\end{enumerate}

\subsection{Glossary and properties of marked graphs}\label{sec:A1}
The definitions of terms for graphs vary among authors. We  adapt terms from \cite{sedgwick}.
\begin{defn}
  \label{defn:d2}
  A {\it graph} consists of a finite set of elements that we call {\it nodes} and a finite set of elements that we call {\it arrows}, along with a function that assigns to each arrow an ordered pair of nodes, (as illustrated in preceding figures). \footnote{The words {\it node} and {\it arrow} evoke experiential associations. Formally, one has the following definition: a graph consists of a pair of finite sets and a  function that assigns  an ordered pair of elements of the  second set to each element of the first set.  However, that is more formal than we are prepared to be.} In the jargon, our graphs are  ``finite, directed multigraphs
with \vspace*{4pt}loops.''
\end{defn}
\noindent We say that an arrow {\it connects} the first node of its pair to the second node.
The first node is the {\it source} of the arrow, and the second node is the
{\it target} of the arrow.  An arrow is an {\it out-arrow} of its source
node and an {\it in-arrow} of its target node.  The source and target nodes of
an arrow need not be distinct, in which case the arrow is a {\it loop}.
More than one arrow can connect a given pair of nodes.  \\
A {\it directed path} in a graph is a sequence of arrows joined head to tail at nodes, with no repeated arrows.\\
A {\it simple} directed path has no repeated nodes.\\
A {\it circuit} is a closed directed path (which can be a {\it loop})\\ 
  A {\it simple directed circuit} is  a closed simple directed path.\\
    A graph is {\it strong}
    if it contains a directed path from every node to every other node. \\
      An {\it induced subgraph} of a graph $G$ is any graph consisting of a subset of nodes of $G$ and all the arrows of $G$ that connect pairs of those nodes. \\
      We call a simple  path from a node $v_1$ to a node $v_2$ a {\it 1-1 path} if (a) it consists of a single arrow or (b) its nodes other than its endpoints each have one in-arrow and one out-arrow.\\
      We call a node with one in-arrow and one out-arrow a
      {\it 1-1 node}.
\begin{defn}
An {\it unlabeled marking} of a graph is an assignment of a (non-negative)
  whole number, called the {\it token number}, to each arrow. In this report,
  that number is either 0 or 1.  A {\it labeled marking} accompanies an
  assignment of 1 (representing token presence) with a label consisting of a string of 0s
  and 1s (sometimes just a single 0 or 1).
\end{defn}
\noindent A node is {\it fireable} in a marking if the token number on each in-arrow of the node \vspace*{3pt}
is at least 1.\\  
For an unlabeled marked graph, a {\it firing} of a node is a relation between two markings: markings $M$ and $M'$ are related by the firing of node $v$ if the token number of $M'$ for each in-arrow of node $v$ 1 lower than the corresponding token number for $M$, and if the token number  $M'$ for each out-arrow of node $v$ is  1 higher than the corresponding token number for \vspace*{3pt}$M$.\\
A marking is called {\it live} if every node is either fireable, or can be made so through some sequence of firings.
\noindent Theorem 12 of \cite{71MarkedGraph} states the following proposition.
\begin{prop}
   If a live marking M' of a strong graph can produce M (through a sequence of firings), then M can produce M'. 
\end{prop}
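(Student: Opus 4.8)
The plan is to reduce the claim to a \emph{realizability} statement about firing-count vectors and then settle that statement by a greedy firing that cannot stall; since the proposition is quoted from \cite[Theorem~12]{71MarkedGraph} one may alternatively just cite it, but here is a self-contained route. Suppose $M'$ produces $M$ via a finite firing sequence $\sigma$, and let $n_v$ be the number of times node $v$ fires in $\sigma$. For an arrow $a$ with source $u$ and target $v$, each firing of $u$ adds a token to $a$ and each firing of $v$ removes one, so $M(a)=M'(a)+n_u-n_v$. Put $N=\max_w n_w$ and let $\vec m$ be the nonnegative integer vector with $m_v=N-n_v$. If some firing sequence $\tau$ starting at $M$ fires each node $v$ exactly $m_v$ times, the marking $M''$ it reaches satisfies $M''(a)=M(a)+m_u-m_v=M'(a)+(n_u+m_u)-(n_v+m_v)=M'(a)$, because $n_w+m_w=N$ for every $w$. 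Hence $M''=M'$, so it suffices to exhibit such a $\tau$ from $M$: everything reduces to realizing the count vector $\vec m$ by a firing sequence at $M$.

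To realize $\vec m$, I would fire greedily from $M$: at each step fire any node whose remaining count (initialized to $m_v$ and decremented at each of its firings) is still positive and that is fireable in the current marking, stopping when the remaining-count vector is $\vec 0$. Each step lowers the total remaining count by one, so the run terminates; the only danger is a stall at some marking $\tilde M$ with remaining-count vector $\vec r\neq\vec 0$ at which no node of positive remaining count is fireable. First observe that every marking encountered is reachable from $M$, hence from the live marking $M'$, hence itself live: by \Cref{prop:circ} the token count on each directed circuit equals its count under $M'$, and by \cite[Theorem~1]{71MarkedGraph} a marking is live precisely when every directed circuit carries at least one token. Now at the stalled $\tilde M$ choose $v$ with $r_v>0$; being non-fireable it has an in-arrow $a=(u,v)$ with $\tilde M(a)=0$. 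Completing the greedy run would reach $M'$, so $0\le M'(a)=\tilde M(a)+r_u-r_v=r_u-r_v$, whence $r_u\ge r_v>0$, and $u$ is likewise non-fireable. Iterating this, and using that the graph is strong (so every node has an in-arrow) and finite, produces a directed circuit whose every arrow is empty in $\tilde M$ --- contradicting the liveness of $\tilde M$. So the greedy run reaches $\vec r=\vec 0$ and yields the desired $\tau$.

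The main obstacle is precisely the no-stall argument of the second paragraph; the arrow-count identity $M(a)=M'(a)+n_u-n_v$ and the reduction around it are routine bookkeeping, and the inheritance of liveness along firing sequences is immediate from \Cref{prop:circ} together with \cite[Theorem~1]{71MarkedGraph}. Two small points deserve explicit mention: the degenerate strong graph consisting of one node with loops, where firing changes no token count and the statement holds trivially; and the fact that the intermediate markings in the greedy run need only be nonnegative (which markings automatically are), never safe, so weakening the hypothesis from ``live and safe'' to merely ``live'' costs nothing.
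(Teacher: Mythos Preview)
Your argument is correct. The paper itself does not prove this proposition: it simply imports it as Theorem~12 of \cite{71MarkedGraph} and moves on. So there is nothing to compare at the level of technique; what you have supplied is a self-contained proof where the paper offers only a citation. Your route---complete the firing-count vector to the constant vector $N$, then realize the complementary vector $\vec m$ by greedy firing, with the no-stall argument tracing back along empty in-arrows to an empty directed circuit---is in fact the standard argument, essentially the one in the cited reference.

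One presentational point worth tightening: the sentence ``Completing the greedy run would reach $M'$, so $0\le M'(a)=\tilde M(a)+r_u-r_v$'' reads as if you are assuming the conclusion. You are not: the identity $M'(a)=\tilde M(a)+r_u-r_v$ is pure bookkeeping (combine $M(a)=M'(a)+n_u-n_v$ with $\tilde M(a)=M(a)+(m_u-r_u)-(m_v-r_v)$ and $n_w+m_w=N$), valid whether or not the run can be completed. State it that way and the apparent circularity disappears. Also, when you iterate backwards along empty in-arrows you obtain a closed \emph{walk} of empty arrows; you should remark that any closed walk contains a simple directed circuit, which then has token count zero and contradicts liveness via \cite[Theorem~1]{71MarkedGraph}.
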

\noindent Hence, ``live markings of a strongly connected graph partition into equivalence classes. \ldots
Let us refer to each equivalence class as a
{\it family}'' \cite[Thm. 12]{71MarkedGraph}. It follows that any marking of a family can be produced from any other by some sequence of \vspace*{3pt}firings.\\
We call the sum of token numbers over each arrow of a circuit  the {\it token count of the circuit}.  Because the firing of a node belonging to a circuit balances the decrease in the token number on the in-arrow of the node with the increase in the token number on the circuit out-arrow, we have the following\vspace*{4 pt} proposition.
\begin{prop}\label{prop:const}
  The token count of a circuit is the same for markings related by firings, and hence the same for all markings in any family \vspace*{4 pt} {\normalfont \cite{71MarkedGraph}}. 
\end{prop}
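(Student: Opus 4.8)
The plan is to reduce the claim to the case of a single firing and then bootstrap to families using the equivalence-class result quoted from \cite[Thm.\ 12]{71MarkedGraph} just above the statement. So first I would fix a circuit $C$ and two markings $M$, $M'$ related by the firing of one node $v$, and show that the token count of $C$ is the same for $M$ and $M'$. By the definition of a firing for an unlabeled marked graph, passing from $M$ to $M'$ lowers by $1$ the token number on each in-arrow of $v$ and raises by $1$ the token number on each out-arrow of $v$, and leaves every other arrow untouched. Hence the only possible change in the token count of $C$ comes from arrows of $C$ that are in-arrows or out-arrows of $v$.

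The key step is the bookkeeping. Since $C$ is a closed directed path --- arrows joined head to tail, with no repeated arrows but possibly repeated nodes --- every time the path arrives at $v$ along an arrow of $C$ it departs $v$ along another arrow of $C$. Consequently the number of arrows of $C$ that are in-arrows of $v$ equals the number of arrows of $C$ that are out-arrows of $v$; call this common number $k$, with $k=0$ when $v$ does not lie on $C$. The firing therefore subtracts a total of $k$ from the token count of $C$ through those in-arrows and adds a total of $k$ through those out-arrows, for a net change of zero. The loop case, where a single arrow of $C$ is at once an in-arrow and an out-arrow of $v$, is the degenerate instance in which that arrow's token number changes by $-1+1=0$, and is likewise balanced. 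Thus the token count of $C$ is invariant under any single firing.

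Next I would extend this to arbitrary firing sequences by an immediate induction on the length of the sequence, concluding that the token count of $C$ agrees on any two markings related by a sequence of firings; invoking the result quoted above --- that within a family any marking can be produced from any other by some sequence of firings --- then gives that the token count of $C$ is constant across the entire family, and since $C$ was an arbitrary circuit the proposition follows. I expect the only real subtlety to be that a circuit may revisit a node (our circuits forbid repeated arrows but not repeated nodes), together with the loop edge case; handling both uniformly is exactly what the ``enter implies leave'' counting argument accomplishes, so once that observation is stated cleanly the remainder is routine.
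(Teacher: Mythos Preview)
Your argument is correct and follows the same approach as the paper, which disposes of the proposition in a single sentence just before its statement: firing a node on a circuit balances the decrease on the circuit in-arrow against the increase on the circuit out-arrow. Your version is simply more careful, explicitly handling circuits that revisit a node and the loop edge case, but the core idea is identical.
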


A generalization of this proposition is key to the definition of  logical distance in \cref{sec:3.3}.     
As exemplified for the nodes $v_0$ and $v_5$ of \cref{fig:nCy}, the union of a directed path with a return path need not be a directed circuit, but the pair of paths still has a property of preserving the sum of their token counts.
\begin{prop}\label{prop:handy}
For any  marked graph with a path, say $P$, connecting two nodes and a return path $Q$ connecting the same two nodes in the opposite direction, the number of tokens on $P$ plus the number of tokens on  $Q$  is invariant under node
  firings and is thus the same for all markings of any family.
\end{prop}
\begin{proof}
The  number of tokens on a directed path changes by +1 when the initial node fires, by $-1$ when the terminal node fires, and is invariant under the firing of its other nodes.  The terminal node of the directed path from $v$ to $w$ is the initial node of the path from $w$ to $v$ and vice versa, so that the sum of tokens on the two paths is invariant under all node firings.  
\end{proof}

To use marked graphs to represent interconnected logical operations, one must  restrict the graphs and the markings to avoid the piling up of more than one token on any arrow.  This means requiring safe markings.
\begin{defn}
  A marking is called {\it safe} if no arrow has a token number greater than 1 and if no sequence of firings can lead to a token number greater than 1 on any arrow.
\end{defn}
\noindent {\bf To represent computing mathematically, we invoke strong graphs with live and safe families.}  Live and safe markings possible only for strong graphs.\footnote{Among graphs that are connected in the weak sense that ignores arrow directions, a graph can have a live and safe marking if and only if it is strongly connected.}

\begin{defn}\label{defn:m1}
 If, for a marking $M$, two or more nodes are fireable, then we say those nodes are {\em concurrently fireable} in $M$.\footnote{(1) Defining concurrency for marked graphs is simpler than for Petri nets.\\(2) The token game corresponding to a marked graph with a family of live and safe markings can be played solitaire to generate a repeating cycle of firings so that each node of the graph fires once  in each cycle.  For graphs on which concurrent firings are possible, there are multiple such cycles of firings.  This corresponds to a partial cyclic order as defined by Stehr \cite{98stehr}. Stehr's definition does not always distinguish between  cyclic orders corresponding to distinct families, nor does the somewhat different definition provided by Haar \cite{cyclic}.
 }
\end{defn}
\subsection{Fragments of graphs for addition, illustrating graph contractions}\label{sec:frag}
Numerical comparisons (as in the thermostat of \cref{fig:therm1}) involve
addition.  \Cref{fig:full} shows addition performed by gates\cite{add} and 
illustrates coarse descriptions using contracted fragments of 
 \vspace{-12 pt} graphs.  \captionsetup[figure]{font=small,skip=0pt}
\begin{figure}[H]\hspace*{.5 in}
 \includegraphics[height=3 in]{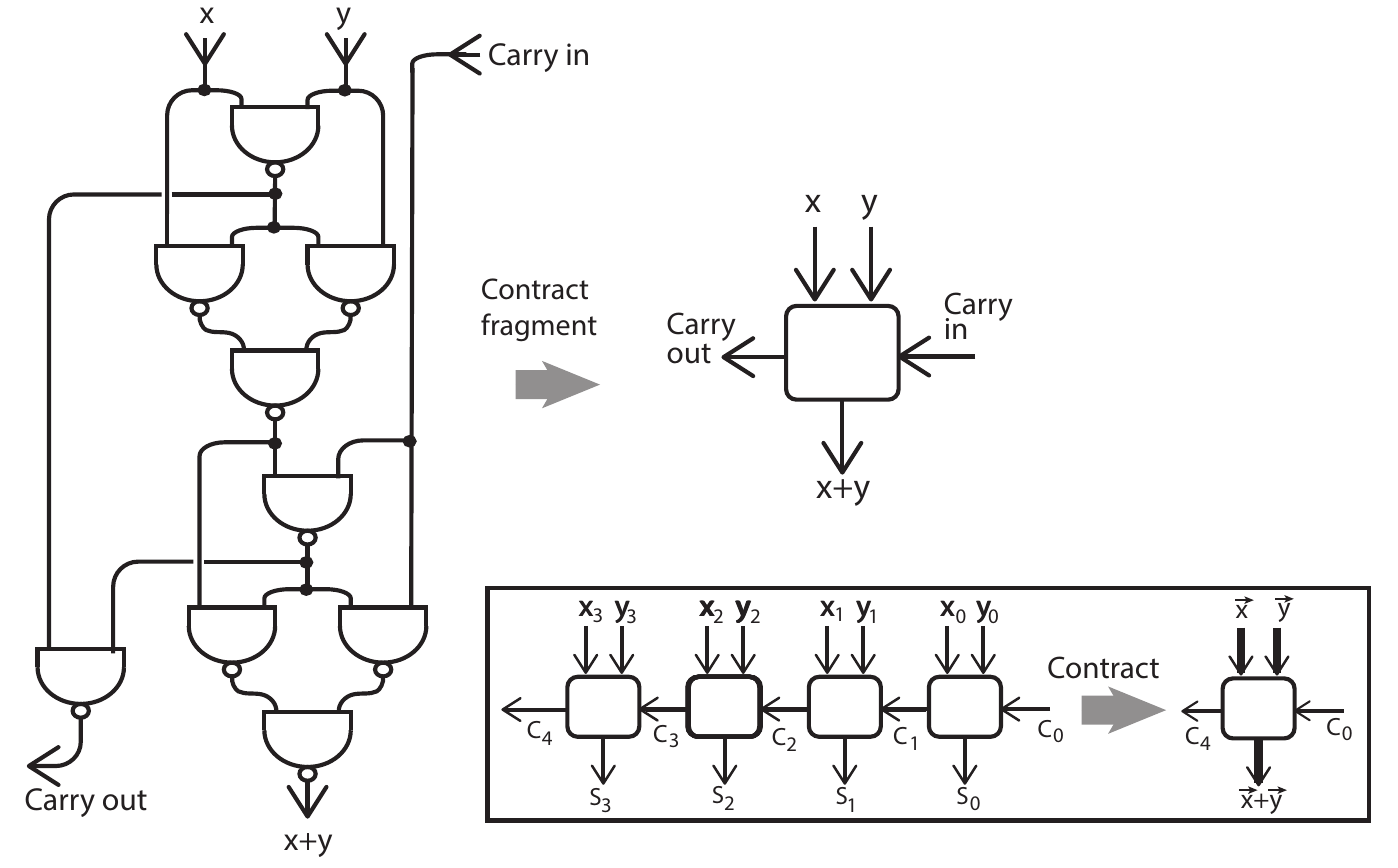}
 \caption{A graph fragment for 1-bit full adder with contraction to a node at a
   coarser level of description. Panel: the joining of 1-bit adders into a
   larger fragment to make a 4-bit adder, followed by contraction. $\stackrel{\rightarrow}{x}$ and
$\stackrel{\rightarrow}{y}$ denote 4-bit strings.\label{fig:full} }
\end{figure}

\subsection{Program control in marked graphs}\label{sec:progControl}

While our purpose is not to express general-purpose computers, this can be done
by marked graphs that have paths of token propagation under program control.
For instance, \cref{fig:cross}(a) pictures a contracted fragment of a graph
representing a switching element controlled by input {\bf c}.  If {\bf c} is 0,
then the token label of $X$ goes to $A$ and the token label of $Y$ goes to $B$, while
if {\bf c} is 1, then the token label of $X$ goes to $B$ and the token label of $Y$
goes to $A$.  \Cref{fig:cross}(b) shows the switching element as an uncontracted
fragment of a \vspace*{-12pt} graph.
\captionsetup[figure]{font=small,skip=0pt}
\begin{figure}[H]\hspace*{1 in}
 \includegraphics[height=2.7 in]{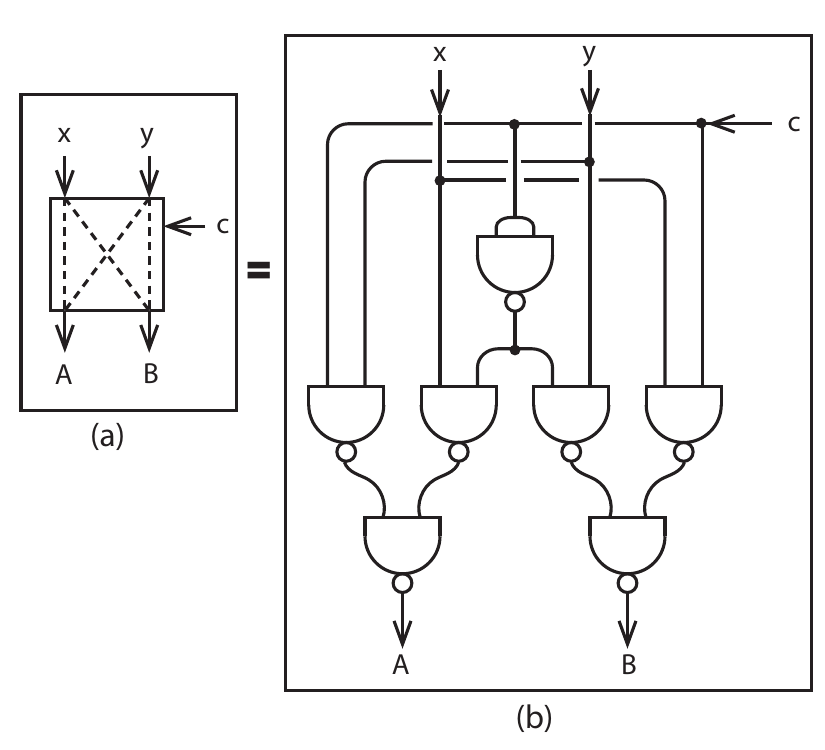}
 \caption{(a) Contracted switching element; (b) switching element as a fragment of a graph.\label{fig:cross}
 }
\end{figure}
\Cref{fig:fswap} shows how the input ${\bf c}$ from the environment controls the routing of other inputs {\bf x} and {\bf y} through a computation.  \begin{figure}[H]\hspace*{1 in}
 \includegraphics[height=2.8 in]{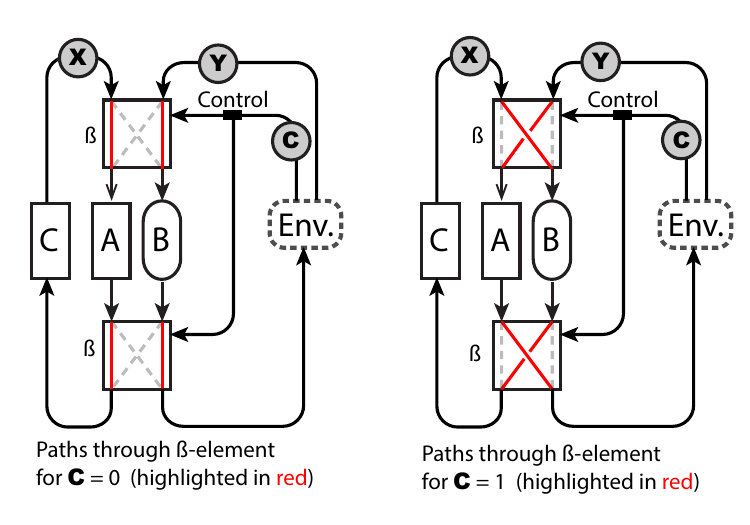}
 \caption{Example of environment controlling computational structure.\label{fig:fswap}
 }
\end{figure}

\subsection{Counting through variation in token labels}
Asymptotically,  all nodes fire equally frequently, but token labels allow
the expression of rhythms of bit values.  For example, \cref{fig:1to3}
shows a marked graph in which  node $V_7$ produces a 1 bit every third cycle
and a 0 bit on the two cycles in between.
\begin{figure}[H]
 \includegraphics[height=2.5 in]{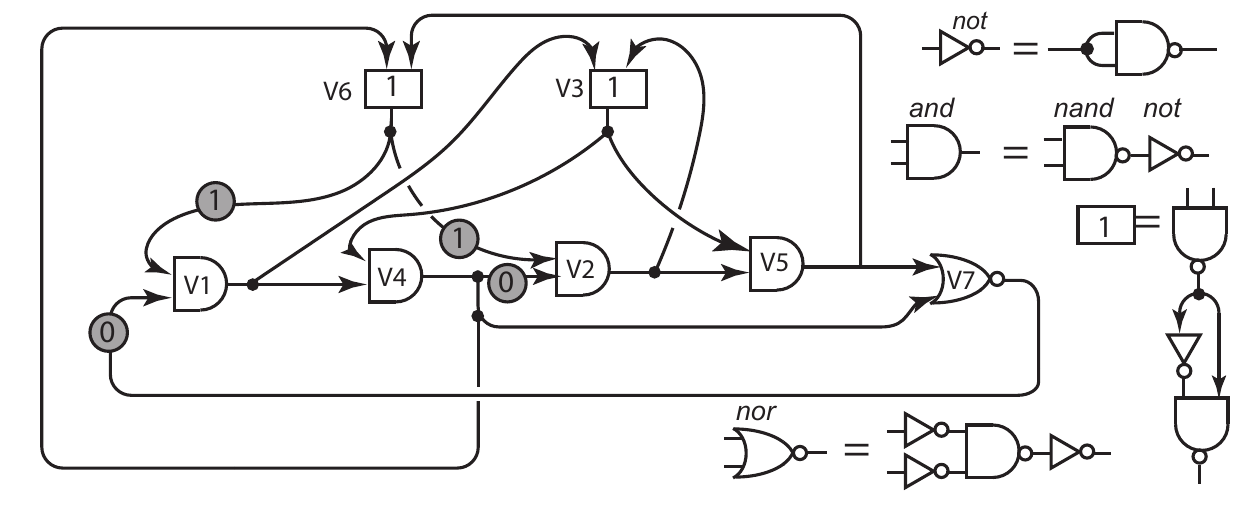}
 \caption{\label{fig:1to3}Marked graph that divides a rhythm by 3.
 }
\end{figure}


\begin{thebibliography}{99}
\bibitem{mask} Andr\"e Malraux, {\it Picasso's Mask} (Holt, Rinehart and Winston, New York, 1976).

\bibitem{05aop} F.~Hadi Madjid and John M.~Myers,
 ``Matched detectors
  as definers of force,'' Ann.\ Physics \textbf{319}, 251--273
  (2005); \url{https://arxiv.org/abs/quant-ph/0404113}; preceded by
John M. Myers and F. Hadi Madjid, ``A proof that measured data and
  equations of quantum mechanics can be linked only by guesswork,'' in
  S. J. Lomonaco Jr. and H.E. Brandt (Eds.) \textit{Quantum Computation and
  Information}, Contemporary Mathematics Series, vol. 305, American
  Mathematical Society, Providence, (2002), pp.\ 221--244;
    \url{https://arxiv.org/abs/quant-ph/0003144}.

 \bibitem{SternGerlach}
Walther Gerlach and Otto Stern, ``Der experimentelle Nachweis der Richtungsquantelung im Magnetfeld.'' Zeitschrift für Physik. 9 (1): 349--352 (1922).  \url{https://link.springer.com/article/10.1007/BF01326983}

\bibitem{helstrom} Carl W. Helstrom, {\it Quantum Detection and
  Estimation Theory}, (Academic Press, New York, 1976).

\bibitem{holevo} Alexander Holevo, \textit{Probabilistic and Statistical
  Aspects of Quantum Theory}, (North-Holland Publishing Co.,
    Amsterdam, 1982).

\bibitem{nielsen} Michael A. Nielsen and Isaac L. Chuang {Quantum Computation and Quantum Information} (Cambridge University Press, Cambridge, UK, 2000).

\bibitem{07tyler} John. M. Myers and F. Hadi Madjid, ``Ambiguity in
  quantum-theoretical descriptions of experiments,'' in K. Mahdavi and
  D. Koslover, eds., \textit{Advances in Quantum Computation}, Contemporary
  Mathematics \textbf{482}, 107--123 (American Mathematical Society, Providence,
  RI, 2009); \url{https://arxiv.org/abs/1409.5678}

\bibitem{Conditional}John M. Myers, ``Conditional probabilities and density operators in quantum modeling,'' {\it Found. of Phys.}
{\bf 36}, 1012--1035 (2006). 

\bibitem{feshbach} Phillip M. Morse and Herman Feshbach, \textit{Methods of Theoretical
Physics} (McGraw-Hill, New York, 1953).

\bibitem{tfr} NIST, Time and Frequency Division, ``Time realization'' 2023. \url{https://www.nist.gov/pml/time-and-frequency-division/time-realization}

\bibitem{allan87} David W. Allan, ``Time and frequency (time-domain)
  characterization, estimation, and prediction of precision clocks and
  oscillators,'' IEEE Transactions on Ultrasonics, Ferroelectrics, and Frequency
  Control, \textbf{UFFC-34}, 647--654 (1987).

\bibitem{71MarkedGraph}Frederick Commoner, Anatol W. Holt, S. Even, and
   A. Pnueli, ``Marked Directed Graphs,'' Journal of Computer and System
   Sciences 5, 511--523 (1971).

 \bibitem{level}Wikipedia.  ``Logic level,'' (2024). \url{https://en.wikipedia.org/wiki/Logic_level}

\bibitem{cell} Bruce Alberts et al., {\it The Molecular Biology of the Cell} 6th ed. (Garland Science, New York  2015).

\bibitem{citric}Wikipedia.  ``Citric acid cycle'' (2023).
\url{https://en.wikipedia.org/wiki/Citric_acid_cycle}

\bibitem{96Petri}Carl Adam Petri, ``Nets, time and space.'' in  {\it Theor. Comput. Sci.} 153, 3--48 (1996).

\bibitem{14aop} John M. Myers and F. Hadi Madjid,  ``Distinguishing between
evidence and its explanations in the steering of atomic clocks,'' Annals of
Physics \textbf{350}, 29--49 (2014); \url{https://arxiv.org/abs/1407.8020}.

\bibitem{19MST}
John M.\ Myers and F. Hadi Madjid, ``Synchronization of symbols as the
construction of times and places,'' Meas. Sci. Technol. {\bf 31} 025106
(2020) (Open access).  \url{https://doi.org/10.1088/1361-6501/ab50dc}

\bibitem{meyr} Heinrich Meyr and Gerd Ascheid, \textit{Synchronization in
  Digital Communications} (Wiley, New York, 1990).

\bibitem{glitch}  T. J. Chaney, C. E. Molnar, ``Anomalous Behavior of Synchronizer and Arbiter Circuits,'' IEEE Trans. Comput. C-22 421--422 (1973). \url{https://doi.org/10.1109/T-C.1973.223730};
B. Cheney and R. Savara,  ``Metastability in SCFL'' IEEE Gallium Arsenide Integrated Circuit Symposium 17th Annual
Technical Digest (1995); \url{https://ieeexplore.ieee.org/iel3/4051/11605/00529020.pdf}. 

\bibitem{life}  Erwin Schr\"odinger, \textit{What is life?} (Cambridge
  University Press, Cambridge, UK, 1944).

\bibitem{ken} Kenneth A. Augustyn, personal communication, 2024.
\bibitem{80Murata}Tadao Murata and Jong Y. Koh, ``Reduction and expansion of live and safe marked graphs,'' IEEE Trans. Circuits and Systems CAS-27 68--71 (1980)

 \bibitem{SI} Bureau International des Poids et Mesures, {\it The
International System of Units (SI)}, Draft 9-th ed., (2014);
https://www.bipm.org/en/publications/si-brochure/ 

\bibitem{05einstein} Albert Einstein, ``Zur Elektrodynamik bewegter
  K\"orper,'' Annalen der Physik, \textbf{17}, 891--921 (1905); translated as
  ``On the electrodynamics of moving
bodies'' \url{https://spaces-cdn.owlstown.com/blobs/rknwr9ocjval2hwwwvnia0ekme1v}

\bibitem{radar}Merrill L. Skolnik.  {\it Introduction to Radar Systems} (McGraw-Hill, New York, 1980).

  \bibitem{edit}  Wikipedia, ``Graph-edit distance,'' (2024).
    \url{https://en.wikipedia.org/wiki/Graph_edit_distance}
  \bibitem{81Murata}Richard Johnsonbaugh and Tadao Murata, ``Additional methods for reduction and expansion of marked graphs,'' IEEE Trans. Circuits and Systems CAS-28 1009--1014 (1981)



\bibitem{McPitts}Warren S. McCulloch and Walter Pitts, ``A logical calculus of the ideas immanent in nervous activity,'' Bull. Math.  Biophys., \textbf{5}, 115--133 (1943).

\bibitem{18neurCompRev}
Nikolaus Kriegeskorte, Pamela K. Douglas,  ``Cognitive computational neuroscience.'' {\it Nat. Neurosci} {\bf 21}, 1148--1160 (2018). \url{https://doi.org/10.1038/s41593-018-0210-5}

\bibitem{lane} Nick Lane, {\it Transformer: the Deep Chemistry of Life and Death} (W. W. Norton, New York and London, 2022).

\bibitem{buzsaki} Gy\"orgy Buzs\'aki, \textit{Rhythms of the Brain}, (Oxford University Press, New York, 2006).

\bibitem{hebb}
  Naoyuki Matsumoto, Daniel Baron, Liang Liang, and Michael C. Crair, ``Hebbian instruction of axonal connectivity by endogenous correlated spontaneous activity,''
Science {\bf 385} 727 (2024) \url{https://doi.org/10.1126/science.adh7814}

\bibitem{shannon48}Claude E. Shannon, ``A mathematical theory of
  communication,'' The Bell System Technical Journal, Vol. 27,
  pp. 379--423, 623--656, July, October, 1948.

  \bibitem{23slime}Chris R. Reid, ``Thoughts from the forest floor: a review of cognition in the slime mould Physarum polycephalum'' {\it Animal Cognition}
{\bf 26}, 1783--1797, (2023)
\url{https://doi.org/10.1007/s10071-023-01782-1}

\bibitem{Winn}D. W. Winnicott, {\it Playing and Reality}.  (Routledge Classics, London and New York, 2005 [1971]).

\bibitem{sedgwick} Robert Sedgewick and Kevin Wayne. {\it Algorithms} (4th ed.). (Addison-Wesley, Reading, MA, 2011). Chap. 4.2

\bibitem{98stehr} Mark-Oliver Stehr, ``Thinking in cycles,'' in: J. Desel and M. Silva (eds), Proc. 19th ICATPN, LNCS 1420:205–225, Springer 1998.

\bibitem{cyclic} Stefan Haar,  ``Cyclic ordering through partial orders,''  Journal
of Multiple-Valued Logic and Soft Computing {\bf 27} (2-3), 209--228
(2016). \url{http://www.oldcitypublishing.com/journals/mvlsc-home/mvlsc-issue-contents/mvlsc-volume-27-number-2-3-2016/}.

\bibitem{add} Wikipedia. ``Adder (electronics).''   (2020).  \url{https://en.wikipedia.org/w/index.php?title=Adder_(electronics)&oldid=975743905}






\end{thebibliography}
\end{document}